\documentclass[reprint,
superscriptaddress,
amsmath,
amssymb,
aps]{revtex4-2}

\usepackage[english]{babel}
\usepackage[usenames]{color}
\usepackage[document]{ragged2e}

\usepackage{amsfonts}
\usepackage{amsmath}
\usepackage{amssymb}
\usepackage{bm}
\usepackage{braket}
\usepackage{caption}
\usepackage{dcolumn}
\usepackage{enumitem}
\usepackage{float}
\usepackage[final]{graphicx}
\usepackage{hyperref}
\usepackage{lineno}
\usepackage{mathtools}
\usepackage{orcidlink}
\usepackage{soul}
\usepackage{subfigure}
\usepackage{url}
\usepackage{verbatim}
\usepackage{xfrac}
\usepackage{xspace}
\usepackage{amsthm}
\usepackage{hhline}
\renewcommand{\arraystretch}{1.2}
\usepackage{algorithm}
\usepackage{algpseudocode}

\DeclarePairedDelimiter{\ceil}{\lceil}{\rceil}

\newtheorem{theorem}{Theorem}
\newtheorem{lemma}{Lemma}

\newtheorem{corollary}{Corollary}

\algrenewcommand\algorithmicrequire{\textbf{Input:}}
\algrenewcommand\algorithmicensure{\textbf{Output:}}

\begin{document}

\title{Resource-Scalable Fully Quantum Metropolis--Hastings\\
for Integer Linear Programming}

\author{Gabriel Escrig}
\email{gescrig@ucm.es}
\affiliation{Departamento de Física Teórica, Universidad Complutense de Madrid.}

\author{Roberto Campos}
\email{robecamp@ucm.es}
\affiliation{Departamento de Física Teórica, Universidad Complutense de Madrid.}

\author{M. A. Martín-Delgado}
\email{mardel@ucm.es}
\affiliation{Departamento de Física Teórica, Universidad Complutense de Madrid.}
\affiliation{CCS-Center for Computational Simulation, Universidad Politécnica de Madrid.}

\justifying

\begin{abstract}

Integer linear programming (ILP) remains computationally challenging due to its NP-complete nature despite its central role in scheduling, logistics, and design optimization. We introduce a fully quantum Metropolis--Hastings algorithm for ILP that implements a coherent random walk over the discrete feasible region using only reversible quantum circuits, without quantum-RAM assumptions or classical pre/post-processing. Each walk step is a unitary update that prepares coherent candidate moves, evaluates the objective and constraints reversibly---including a constraint-satisfaction counter to enforce feasibility---and encodes Metropolis acceptance amplitudes via a low-overhead linearized rule. At the logical level, the construction uses $\mathcal{O}(n\log_2 N)$ qubits to represent $n$ integer variables over the interval $[-N,\,N-1]$, and the Toffoli-equivalent cost per Metropolis step grows linearly with the total logical qubit count. Using explicit ripple-carry adder constructions, we support linear objectives and mixed equality/inequality constraints. Numerical circuit-level simulations on a broad ensemble of randomly generated instances validate the predicted linear resource scaling and exhibit progressive thermalization toward low-cost feasible solutions under the annealing schedule. Overall, the method provides a coherent, resource-characterized baseline for fully quantum constraint programming and a foundation for incorporating additional quantum speedups in combinatorial optimization.
\end{abstract}

\maketitle
\justifying

\section{Introduction}

\begin{figure*}[t]
  \centering
  \includegraphics[width=\textwidth]{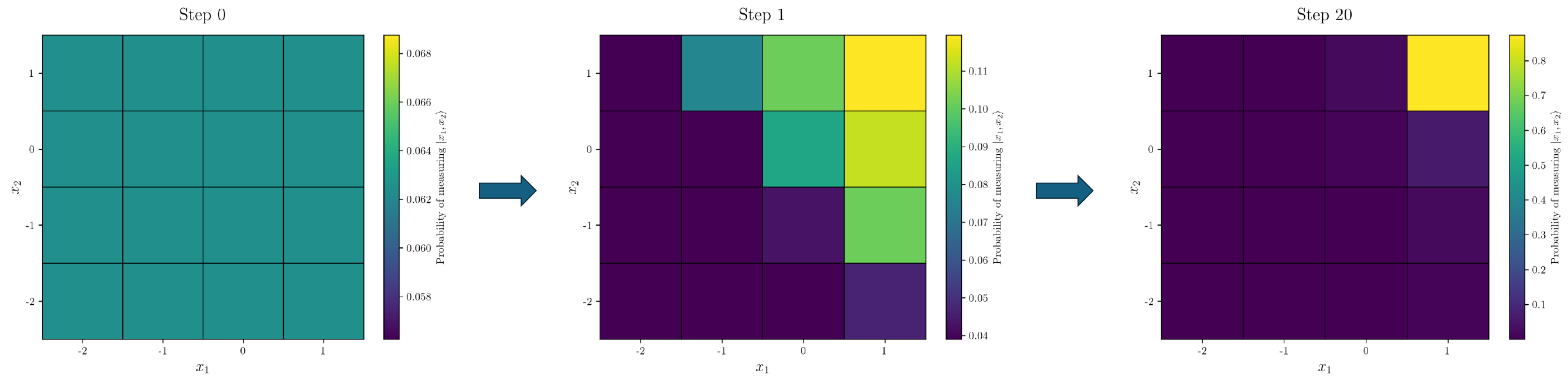}
  \caption{\justifying Heat map of the probability distribution obtained after applying the \textit{Randomized Quantum Metropolis} algorithm (see Sec.~\ref{sec:overview}) for $20$ steps. Color intensity encodes the probability of measuring the quantum state $\ket{x_1, x_2}$, where each grid cell corresponds to a discrete point in the feasible region defined by the constraint $x_1 + x_2 \ge 0$. The optimization function is given by $f(x_1, x_2) = -2x_1 - x_2$, and the distribution reflects the thermalized process. The problem is solved for 2 qubits of discretization per variable, leaving the possible search space as the square $[-2,1]\times[-2,1]$.}
  \label{fig:3_step_inference}
\end{figure*}

Integer linear programming sits at the heart of operations research, underpinning critical applications in scheduling, logistics, and design. Defined by the optimization of a linear objective function subject to linear constraints over integer variables, ILP provides a modeling framework broad enough to cover a vast range of industrial problems \cite{clautiaux2025, zhang2023}. While the Simplex algorithm remains the de‑facto standard for continuous relaxations \cite{koch2022}, the integrality constraints render the general problem NP-complete \cite{Schrijver1986}. Consequently, modern commercial solvers rely on sophisticated branch‑and‑bound or branch‑and‑cut frameworks, augmented by primal heuristics, to certify optimality \cite{Gomory1958, Land1960, Padberg1991}.

In contexts where diverse high-quality solutions are preferred over a single optimal proof—such as generating warm starts for exact solvers—Metropolis--Hastings (MH) sampling serves as a powerful primitive \cite{fischetti2003}. As a lightweight Markov-chain Monte Carlo method, MH enables the exploration of large discrete state spaces and admits efficient parallelization \cite{calderhead2014}. However, classical ILP solvers face three inherent barriers: combinatorial explosion of the search space, scalability bottlenecks related to constraint evaluation, and reliance on structural heuristics that may not generalize.

From a complexity‑theoretic standpoint, ILP’s NP‑completeness precludes universal polynomial‑time algorithms—classical or quantum. While quantum speed‑ups have been reported for semidefinite programming, they typically rely on sparsity and quantum‑RAM (QRAM) data‑access assumptions \cite{kerenidis2020, casares2020, augustino2023, dalzell2023}, conditions that do not carry over to the combinatorial landscape of ILP. Similarly, variational approaches like QAOA or adiabatic annealing offer approximate mappings but have yet to demonstrate systematic performance gains on large, constrained instances without extensive classical feedback loops \cite{booth2021, svensson2023}.

In this work, we depart from hybrid or oracle-based paradigms by proposing a \emph{fully quantum} Metropolis–Hastings algorithm for ILP. 
Every arithmetic subroutine---cost evaluation, constraint counting, and acceptance-probability computation---is implemented as a reversible quantum circuit, avoiding any classical pre- or post-processing. In contrast to previous approaches relying on QRAM or preloaded data, our method performs all calculations \emph{on-the-fly}, enabling coherent updates and transitions directly within the quantum circuit. Furthermore, the algorithm is designed with an all-to-all connectivity topology, allowing direct transitions between arbitrary states. This approach facilitates global exploration of the optimization landscape, thereby significantly mitigating the risk of the search process stagnating in local minima.

The resulting unitary walk explores the feasible polytope coherently, biasing the quantum state amplitude toward high-quality integer solutions. In addition to preserving the massive parallelism of the Metropolis–Hastings framework, this construction opens the door to genuinely quantum enhancements such as amplitude amplification and coherent thermalization, which are unavailable in hybrid or semi-classical samplers. 

A central contribution of this work is the explicit characterization of the logical resource requirements. We derive two key scaling results:
First, the spatial complexity scales as $\mathcal{O}(n \log_2 N)$, where $n$ is the number of variables and $N$ the discretization size. This reflects an efficient logarithmic encoding where auxiliary overheads scale linearly with $n$. Second, the logical depth of the Metropolis step, measured in Toffoli-equivalent gates, scales linearly as $\mathcal{O}(k)$ with the total qubit count $k$. This linear scaling captures the dominant cost of reversible arithmetic and establishes that, while the classical search space grows exponentially, the quantum circuit resources remain polynomial and predictable.

While the method achieves full quantum coherence and predictable scaling, it remains subject to the practical overhead associated with reversible arithmetic and multi-controlled operations, which dominate the Toffoli-equivalent cost. Moreover, the number of annealing steps required to reach thermal equilibrium depends on the specific instance and temperature schedule, introducing a heuristic component typical of annealing-based approaches. 

It is important to note that the efficiency analysis presented in this work refers strictly to the logical circuit complexity, assuming access to a fully fault-tolerant quantum computer. Physical-level overheads related to error correction and qubit connectivity are therefore not included in the reported scaling estimates. Nevertheless, these characteristics are intrinsic to the algorithm’s design and do not compromise its asymptotic efficiency or scalability.

The remainder of this paper is organized as follows. Section~\ref{sec:ILP_fomalization} formalizes the ILP problem and establishes the notation used throughout the paper. Section~\ref{sec:overview} provides a high-level overview of the proposed quantum Metropolis–Hastings algorithm. Section~\ref{sec:restrictions} describes the quantum encoding of linear constraints and their circuit implementation. Section~\ref{sec:quantum_metropolis} details the construction of the full quantum Metropolis–Hastings operator, including proposal, conditional rotation, transition, and reflection subroutines. Finally, Section~\ref{sec:results} presents the numerical simulations that validate the scaling analysis and performance of the method, and Section~\ref{sec:conclusions} summarizes our findings and outlines directions for future research.

\begin{figure*}[t]
  \centering
  \includegraphics[width=\textwidth]{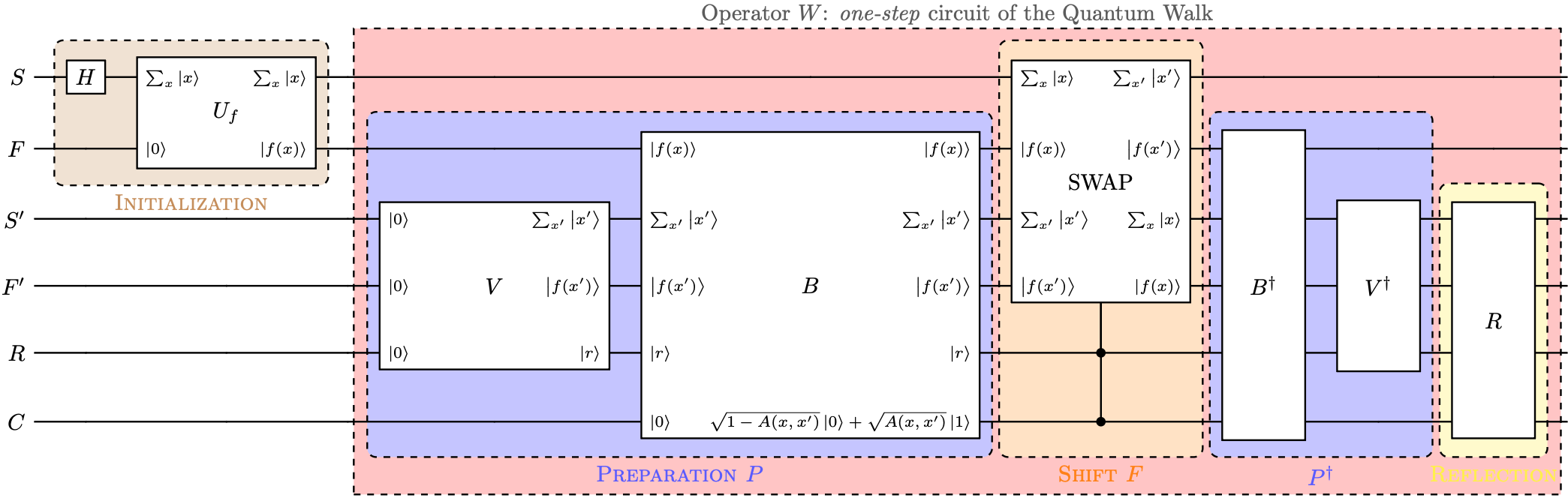}
  \caption{\justifying One iteration of the quantum Metropolis walk. Registers (top to bottom) hold the current position $S$, its cost $F$, a candidate position $S'$, its cost $F'$, the constraint counter $R$ and a coin qubit $C$. The red shaded region encloses the walk operator \(W = R\,P^{\dagger}\,F\,P\).}
  \label{fig:algorithm_outline}
\end{figure*}

\section{Linear Optimization Problems} \label{sec:ILP_fomalization}

Linear programming (LP) is the canonical framework for optimizing a linear objective over a polyhedral feasible region described by linear equality and inequality constraints. 

Integer linear programming strengthens LP by imposing integrality on some or all decision variables, thereby capturing inherently discrete decisions at the cost of a substantial increase in computational difficulty.

\subsection{Canonical definition}
Let $\mathbf{x}=(x_{1},\dots,x_{n})^{\mathsf T}\in\mathbb{R}^{n}$ be the decision vector and
let $\Theta\subseteq\mathbb{R}^{n}$ denote the state space, i.e., the set of all values
$\mathbf{x}$ is allowed to take.

A linear optimization problem in canonical form is

\begin{align}\label{eq:canonicalLP}
  \underset{\mathbf{x}\in\Theta}{\text{minimize}} \quad
    & f(\mathbf{x}) \;=\; c_{0} + c_{1}x_{1}+\dots+c_{n}x_{n}, \\[2pt] \notag
  \text{subject to} \quad
    & g_{i}(\mathbf{x}) \;\ge\; 0,  \quad i = 0,\dots,m-1, \\[2pt]  \notag
    & h_{j}(\mathbf{x}) \;=\; 0,    \quad j = 0,\dots,p-1. 
\end{align}

Throughout this work, we consider $f$, each $g_{i}$, and each $h_{j}$ to be linear functions defined by integer coefficients and constants. Specifically, we assume $c_l \in \mathbb{Z}$ for all $l$, and that the linear constraints are similarly parameterized by integers. This restriction aligns with the fixed-point arithmetic inherent to the quantum adder constructions utilized in our algorithm. The inequalities $g_{i}(\mathbf{x})\ge0$ are called inequality constraints, whereas the equalities $h_{j}(\mathbf{x})=0$ are equality constraints; we take $m\ge0$ and $p\ge0$.

The feasible region is
\begin{equation}
      \Omega \;=\;
  \bigl\{\mathbf{x}\in\Theta : g_{i}(\mathbf{x})\ge0 \;\forall i,\;
  h_{j}(\mathbf{x})=0 \;\forall j\bigr\}\subseteq\Theta .
\end{equation}
By construction, the feasible region $\Omega$ is a convex polytope, obtained as the intersection of half-spaces and hyperplanes defined by the linear constraints.

An optimal solution is a point $\mathbf{x}^\star\in\Omega$ that minimizes (or, in the equivalent maximization form, maximizes) $f(\mathbf{x})$ on $\Omega$.

Because $\max f(\mathbf{x})=-\min[-f(\mathbf{x})]$, every minimization problem can be converted into a maximization problem. Likewise, a “$\le$’’ constraint can be turned into a “$\ge$’’ constraint by simply multiplying by $-1$.

Finally, in integer linear programming, the additional integrality constraint $\mathbf{x} \in \mathbb{Z}^n$ is imposed. This represents the central challenge addressed in this work. In the following sections, we show how such problems can be tackled within a fully quantum framework based on quantum walks.

\section{Algorithm overview} \label{sec:overview}

The quantum Metropolis-Hastings algorithm acts as a unitary embedding of the classical Metropolis–Hastings sampler. While the general framework allows for sampling from arbitrary probability distributions, for optimization purposes we configure the algorithm to produce quantum superpositions whose measurement statistics converge to the Boltzmann distribution $\pi_\beta(\mathbf{x})\propto e^{-\beta f(\mathbf{x})}$. By repeatedly applying a reversible proposal–acceptance block, the algorithm explores the feasible region \(\Omega\) while amplifying the amplitudes of maximal reward solutions. Once the walk has mixed, a measurement yields—with high probability—a state that minimizes the linear objective $f(\mathbf{x})$ in (\ref{eq:canonicalLP}), as can be seen in Figure \ref{fig:3_step_inference}.

In our construction, the reversible proposal–acceptance block is implemented using the framework of coined quantum walks, following the quantum simulated annealing (QSA) algorithm introduced in \cite{PhysRevLett.101.130504}, which ensures convergence to the Boltzmann distribution, but is slightly modified with the randomization technique, resulting in an original scheme to address the ILP problem.

A discrete-time quantum walk is obtained by repeatedly applying a unitary step operator, denoted by $W$, where each application of $W$ corresponds to one step of the walk~\cite{Montanaro:2005rnq}. The operator $W$ must be constructed so as to respect the adjacency structure of the underlying graph.

In our setting, each vertex corresponds to an assignment $\mathbf{x}$ of the $n$ decision variables, encoded in the system register $S$ as
\begin{equation}
    \ket{\mathbf{x}}_{S} \;=\; \ket{x_1}\otimes \ket{x_2}\otimes \cdots \otimes \ket{x_n}.
\end{equation}

To represent transitions between vertices, the system space is extended with an auxiliary \emph{proposal} space. In general graphs the number of neighbors may depend on the vertex, so it is more precise to view this space as having dimension at least the maximum degree of the graph (or, equivalently, to choose a register large enough to index all candidate moves allowed by the proposal rule). We denote this auxiliary register by $S'$, and we interpret it as encoding a proposed neighboring configuration $\ket{\mathbf{x}'}_{S'}$.

In addition, we introduce a \emph{coin} register $C$ consisting of a single qubit, which coherently encodes the transition amplitude associated with the move from the current state $\ket{\mathbf{x}}_{S}$ to the proposed state $\ket{\mathbf{x}'}_{S'}$. Furthermore, two \emph{function} registers $F$ and $F'$ are used to store the objective-function values evaluated at the current and proposed configurations, respectively, i.e., $f(\mathbf{x})$ and $f(\mathbf{x}')$ as defined in Eq.~(\ref{eq:canonicalLP}). Finally, a \emph{constraint-counter} register $R$ records how many of the problem constraints are satisfied by a candidate configuration, and is used to ensure that only feasible transitions contribute to the walk dynamics.

The underlying transition graph is taken to be fully connected: from any configuration $\ket{\mathbf{x}}_{S}$ the proposal register can coherently encode any candidate configuration $\ket{\mathbf{x}'}_{S'}$. Feasibility is not enforced by removing edges from the graph, but by the walk dynamics itself. In particular, the constraint-evaluation circuit and the acceptance filter suppress (or null) amplitudes associated with moves that would violate the problem constraints, so that transitions between feasible and infeasible configurations carry zero effective weight. Thus, while the proposal mechanism remains complete at the level of connectivity, the implemented quantum evolution restricts the walk to the feasible region by construction.

The MH acceptance rule is implemented coherently via the coin register. Given a proposed transition from $\mathbf{x}$ to $\mathbf{x}'$, we define the acceptance probability
\begin{equation}\label{eq:acc_prob_overview}
A(\mathbf{x},\mathbf{x}') \; :=\; \min\!\left\{1,\,\exp\!\big(-\beta\,[f(\mathbf{x}')-f(\mathbf{x})]\big)\right\},
\end{equation}
where $\beta$ represents an annealing schedule. This is the Boltzmann distribution, which we want to sample in order to search for the minimum or maximum values of the objective function. Any function that highlights the maximal values can be used, since the algorithm will attempt to sample it and thus increase the probability of finding these optimal values. In our coined-walk construction, an acceptance subroutine encodes $\sqrt{A(\mathbf{x},\mathbf{x}')}$ as an amplitude on the coin qubit. 

We now outline the one-step quantum walk operator $W$ at a heuristic level, deferring the explicit circuit construction to Section~\ref{sec:quantum_metropolis} and the formal proofs of unitarity and spectral properties to Appendix~\ref{ap:spectral_decomposition}. One step of the quantum Metropolis walk is implemented by the unitary
\begin{equation}\label{eq:W_overview}
W \;=\; R\,P^{\dagger}\,F\,P .
\end{equation}

The structure and ordering of these constituent operators of the walk $W$ mirrors the standard Metropolis--Hastings update rule. First, the \emph{preparation} stage $P$ generates a coherent candidate move in $S'$ and encodes its acceptance amplitude in the coin and auxiliary registers (including feasibility information). Second, the \emph{conditional shift} $F$ applies the proposed update, conditioned jointly on feasibility and on the acceptance register. Third, $P^{\dagger}$ uncomputes the auxiliary information, restoring the ancilla registers to their initial states and ensuring reversibility. Finally, the \emph{reflection} $R$ applies a phase flip about the designated reference subspace, yielding the spectral structure required by the Szegedy-type quantization (in particular, a unique eigenvalue $1$ associated with the target stationary state and a nonzero spectral gap). Appendix~\ref{ap:spectral_decomposition} proves that $W$ is unitary and that its induced measurement statistics reproduce the Metropolis--Hastings acceptance rule. Figure~\ref{fig:algorithm_outline} illustrates one full iteration of the operator $W$.

Operationally:

\begin{enumerate}[label=\arabic*), leftmargin=2.2em]

  \item[\bf (1)] \emph{Initialization.}  
        Prepare a uniform superposition and compute the cost function as the initial input state, 
        \begin{equation}
          \ket{\psi^{(0)}} = \frac{1}{\sqrt{\Theta}} \sum_{\mathbf{x}\in\Theta} 
          \ket{\mathbf{x}}_{S}\ket{f(\mathbf{x})}_{F},
        \end{equation}
        with auxiliary registers set to $\ket{0}$.

  \item[\bf (2)] \emph{Set temperature.}  
        Set the maximum number of annealing steps $Q$ and the maximum number of repetitions per temperature step $T$.
        Fix the inverse temperature schedule $\{\beta_1, \beta_2, \ldots, \beta_Q\}$, typically ranging from $\beta_1 \approx 0$ to a sufficiently large $\beta_Q$.  
        Initialize counter $k \!\leftarrow\! 1$.

  \item[\bf (3)] \emph{Randomized quantum walk.}  
        Draw a random integer 
        \begin{equation}
          t \in \{1, \,T\},
        \end{equation}
        and apply the corresponding block
        \begin{equation}
          \ket{\psi^{(k+1)}} = (W_{k+1})^{t} \ket{\psi^{(k)}},
        \end{equation}
        where $W_{k} = R P_{k}^{\dagger} F P_{k}$.  
        Here $P_k$ encodes the thermalization step defined by $\beta_k$.  
        The operation $(W_k)^t$ performs a randomized quantum walk whose length is controlled by the random variable $t$, enhancing mixing within the feasible subspace.

  \item[\bf (4)] \emph{Partial measurement and update.}  
        Perform a partial measurement on the coin qubit $\mathcal{H}_C$ and the movement register $\mathcal{H}_{S'}$ and discard the result.
        Increment the temperature step counter $k \!\leftarrow\! k + 1$.  
        If $k < Q$, return to Step~(3) and continue the annealing process with the next $\beta_k$.

  \item[\bf (5)] \emph{Termination.}  
        When $k = Q$, the process halts and the final state $\ket{\psi^{(Q)}}$ approximates the Boltzmann distribution
        \begin{equation}
          \pi_{\beta_Q}(\mathbf{x}) \;\propto\; e^{-\beta_Q f(\mathbf{x})},
        \end{equation}
        restricted to the feasible region $\Omega$.
\end{enumerate}

Following Ref.~\cite{PhysRevLett.101.130504}, we define an annealing schedule $\{\beta_k\}_{k=1}^{Q}$ that dictates the rate of cooling. For each $\beta_k$, we construct a quantum walk operator $W_k$ corresponding to the classical transition matrix $M_{\beta_k}$, whose stationary state is the quantum Gibbs distribution
\begin{equation}
\pi_{\beta_k}(\mathbf{x}) \propto e^{-\beta_k f(\mathbf{x})}.
\end{equation}
The algorithm sequentially prepares these thermal states through the iterative transformation $\ket{\psi^{(k)}} \longrightarrow \ket{\psi^{(k+1)}}$. Crucially, the step sizes $\Delta\beta_k = \beta_{k+1} - \beta_k$ are chosen to be sufficiently small to ensure a high overlap between consecutive thermal states,
\begin{equation}
\big|\!\braket{\psi^{(k)}|\psi^{(k+1)}}\!\big|^2 \geq 1 - \epsilon,
\end{equation}
for some small $\epsilon > 0$. This condition ensures that the sequence of projections and evolutions effectively drags the system toward the low-cost states of $f(\mathbf{x})$ with high probability.

At high temperatures (small $\beta$), nearly all transitions are accepted except those violating feasibility constraints, allowing a broad exploration of the search space. As the temperature decreases, the acceptance probability becomes increasingly selective, and in the low-temperature limit, only transitions that reduce the objective function value are accepted. This gradual cooling emulates the classical simulated-annealing process, but here it occurs coherently within a reversible quantum framework.

Unlike the classical Metropolis–Hastings procedure, where convergence relies on the empirical estimation of a mixing time $T_{\mathrm{mix}}$ \cite{GIORDANO2026170305}, the quantum approach relies on spectral filtering via quantum phase estimation. The walk operator $W_k$ possesses a unique eigenstate with eigenvalue $1$ (corresponding to eigenphase $\theta=0$) while all other eigenvalues are separated by a finite spectral gap $\Delta$ (see Appendix \ref{ap:spectral_decomposition}). Consequently, the computational cost is determined by the requirement to resolve this gap. Specifically, distinguishing the target eigenstate requires a phase estimation precision of $2^{-q} \lesssim \Delta$, which implies $\mathcal{O}(1/\Delta)$ applications of $W_k$ \cite{Kitaev1995QuantumMA, 10.1098/rspa.1998.0164}. Provided this precision is met, the QPE subroutine projects the state $\ket{\psi^{(k)}}$ onto the desired Gibbs distribution with high probability, effectively filtering out the excited states that correspond to the classical mixing transient.

After $Q$ annealing steps, with sufficient spectral resolution, the final state $\ket{\psi^{(Q)}}$ approximates the quantum Gibbs state associated with the Boltzmann weights $\pi_{\beta_Q}(\mathbf{x}) \propto e^{-\beta_Q f(\mathbf{x})}$. A projective measurement on the system register $S$ then samples configurations $\mathbf{x}$ according to this low-temperature thermal distribution, yielding solutions whose energies lie within $\epsilon$ of the global optimum.

In the original QSA formulation~\cite{PhysRevLett.101.130504}, the total number of required quantum walk steps to accurately sample the thermal distribution scales as
\begin{equation}
    \mathcal{N}_{\mathrm{QSA}} = 
    \mathcal{O}\!\left[
    \left( \frac{E_M}{\gamma} \right)^{2}
    \frac{\log^{2}(d/\epsilon)\,\log d}{\epsilon \sqrt{\delta}}
    \right],
\end{equation}
where $E_M$ is the bound on the energy function range, $d$ represents the dimension of the state space, and $\epsilon$ is the desired precision. Crucially, $\delta$ denotes the spectral gap of the \emph{classical} transition matrix. Due to the quadratic speedup property of Szegedy-type quantum walks \cite{1366222, PhysRevResearch.5.033059}, the spectral gap of the unitary operator $W_k$—denoted previously as $\Delta$—scales as $\Delta \sim \sqrt{\delta}$. This relationship dictates the phase estimation precision required to distinguish the ground state, leading to the $1/\sqrt{\delta}$ factor in the complexity, in contrast to the $1/\delta$ scaling of classical mixing times.

However, in the \emph{optimization} setting considered here, the requirements are relaxed compared to exact sampling. Since our goal is to amplify the probability of the optimal configuration rather than to strictly sample the full Gibbs distribution, the number of annealing steps need not saturate the theoretical bound derived for sampling. In practice, the process acts as a heuristic filter: the sequential application of $W_k$ operators, combined with the partial measurement scheme \cite{PhysRevResearch.6.043014}, progressively concentrates probability amplitude into the lowest-energy subspace. Unlike Grover-type amplitude amplification, where precise timing is critical to avoid over-rotation \cite{PhysRevLett.79.325, PhysRevA.62.062303}, this annealing approach is robust: extending the schedule typically improves—or at least preserves—the success probability, rather than oscillating away from the solution.

Therefore, while the theoretical upper bound in~\cite{PhysRevLett.101.130504} remains the rigorous limit for exact Gibbs sampling, in the optimization-oriented version of QSA implemented here, convergence to near-optimal solutions is observed empirically to scale more favorably. We find that a polynomial or polylogarithmic number of temperature steps often suffices to reach the feasible ground state. In this sense, the annealing schedule serves as a tunable heuristic: slower cooling (smaller $\Delta\beta$) reliably enhances the fidelity of the final state with respect to the global minimum.

\begin{figure}[t]
    \includegraphics[width=0.25\textwidth]{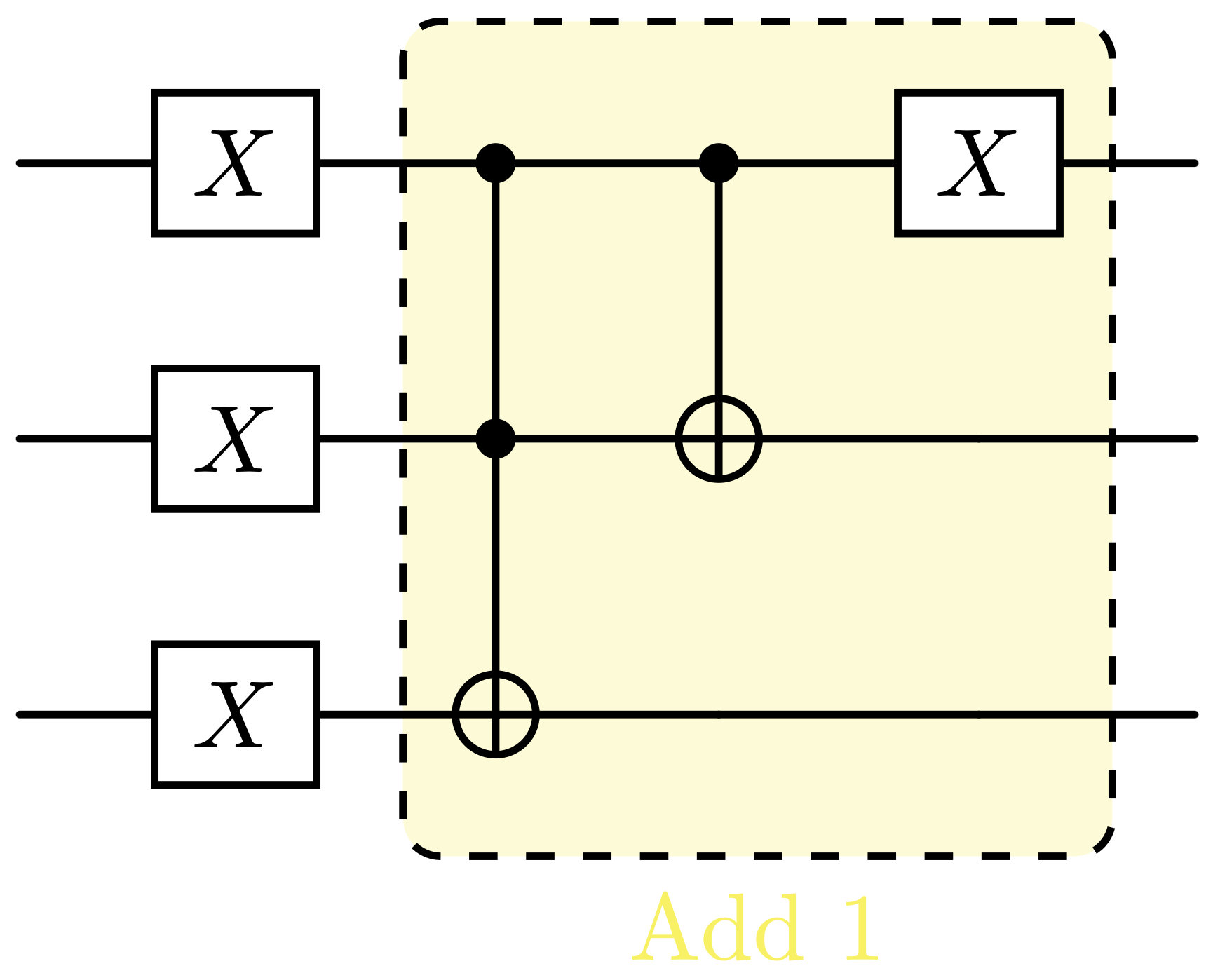}
    \captionsetup{justification=justified,singlelinecheck=false,font=small}
    \caption{\justifying Quantum circuit implementing two's-complement negation of an $n$-qubit register ($n=3$ shown). The circuit comprises two stages: (i) bit-wise NOT operations (left-hand $X$ gates) that flip all qubits, and (ii) a cascade of multi-controlled NOT gates (shaded region) that increments the result by one, together realizing the two's-complement negation $- x = \overline{x} + 1$.}
    \label{fig:twos_complement}
\end{figure}

\section{Quantization of the Restrictions} \label{sec:restrictions}

 \begin{figure*}[t]
  \centering
  \includegraphics[width=1\linewidth]{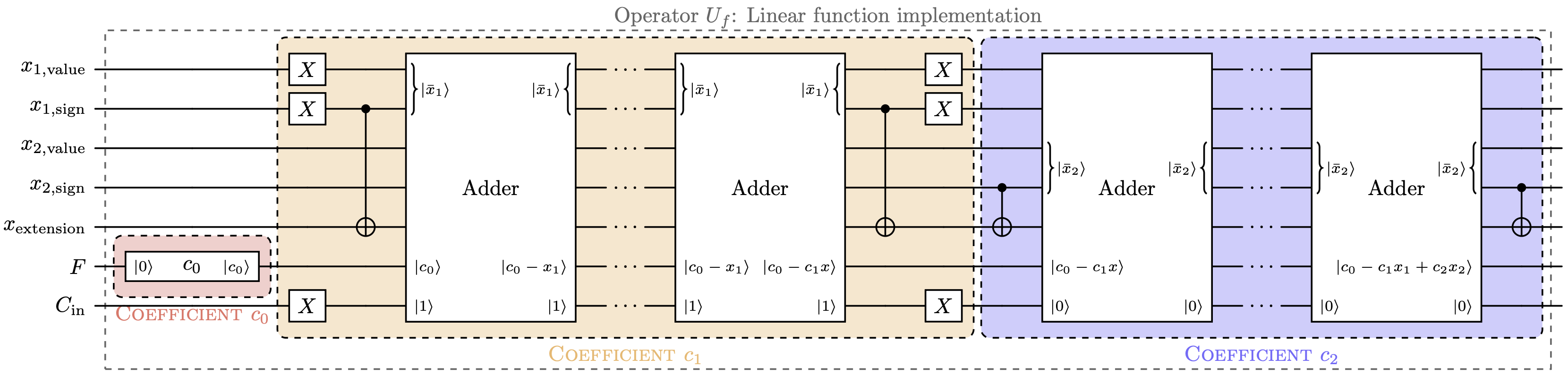}
  \captionsetup{justification=justified, font=small}
  \caption{\justifying Synthesis of the unitary $U_f$ for the two-variable linear form $f(x_{1},x_{2}) = c_{0}-c_{1}x_{1}+c_{2}x_{2}$ with $c_{i}>0$. Register $F$ is first loaded with the constant term $c_{0}$ (red box). \textbf{Orange block:} subtraction of $c_{1}x_{1}$. The value qubits of $x_{1}$ and the global carry-in are flipped with $X$ gates, the sign bit is fan-out via CNOTs to sign-extend $x_{1}$ up to the number of qubits of register $F$, and the CDKM full adder is applied $c_{1}$ times, yielding $c_{0}-c_{1}x_{1}$. \textbf{Blue block:} addition of $c_{2}x_{2}$. Because $c_{2}$ is positive, only sign extension is required before invoking the adder $c_{2}$ times. Upon completion the accumulator holds $c_{0}-c_{1}x_{1}+c_{2}x_{2}$ and all ancillas (carry and extension lines) are returned to $\ket{0}$, ensuring that $U_f$ is reversible.}
  \label{fig:quantum_adder}
\end{figure*}

By definition, linear programming restricts both the objective function and every constraint to linear forms; evaluating any of them on a quantum register therefore boils down to computing weighted sums of integer variables. Addition is thus the core arithmetic primitive, and the literature offers several depth and qubit efficient reversible adders to implement it \cite{PhysRevA.54.147, wang2025quantum}. Building on these primitives, the remainder of this section assembles the higher‑level unitary operators required by our algorithm.

\subsection{Encoding}

As we are going to work with integers, we have to represent both positive and negative numbers. In order to represent negative numbers, we will use the binary encoding two's complement. In this way, the subtraction can be performed according to:
\begin{equation}
    a - b = a + \bar{b} + 1,
\end{equation}
where $\bar{x}$ is the negation of $x$ bit by bit, and is nothing more than the consecutive application of $X$ gates. Increasing the unit can be done simply with multi-controlled gates. This simple circuit is shown in Figure \ref{fig:twos_complement}.

\subsection{Quantum Circuits for Linear Functions}
Multiplication by a classical integer coefficient is implemented as a sequence of conditional additions; hence the only primitive we need is a reversible adder.  We adopt the Cuccaro–Draper–Kutin–Moulton ripple‑carry design (CDKM) \cite{Cuccaro:2004xxx} because of its qubit‑optimal footprint; any other reversible adder can be substituted without altering the logic that follows.

The CDKM adder natively performs
\begin{equation}
    \ket{a}\ket{b}\!\mapsto\!\ket{a}\ket{a+b\bmod 2^{m}},
\end{equation}
on two equally sized $m$ qubit registers. To incorporate subtraction with the CDKM adder we negate the second operand in two steps. First apply \(X\) gates to all \(m\) qubits of \(b\) (bit‑wise NOT). Then, invoke the fixed‑adder version of CDKM with the carry‑in initialized to \(\ket{1}\).  This adds the required +1 and yields \(a + \overline{b} + 1 = a-b\).

\begin{figure*}[t]
    \centering \includegraphics[width=1\textwidth]{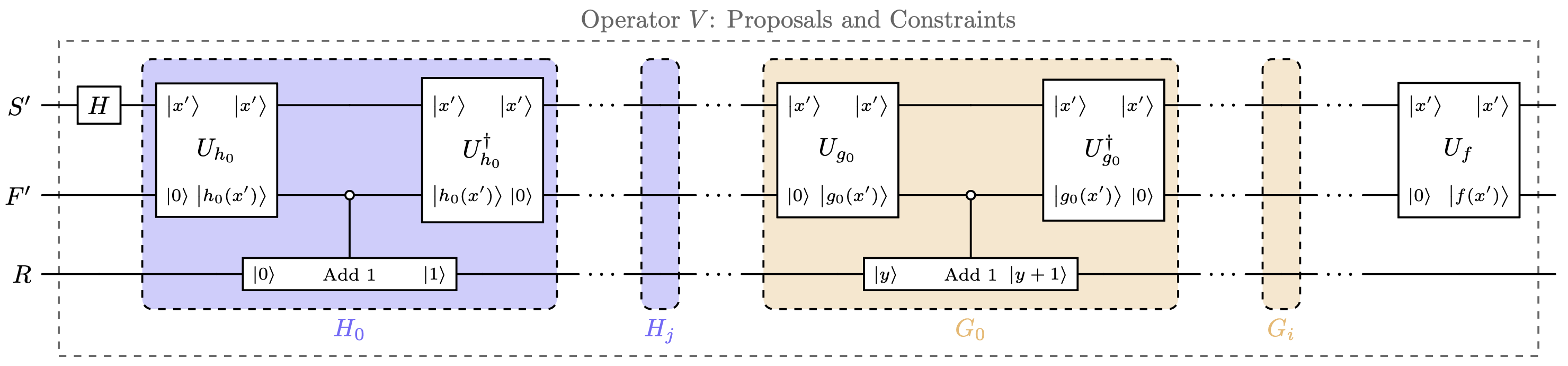}
    \captionsetup{justification=justified, font=small}
    \caption{\justifying Explicit circuit for the operator $V$ of the one-step operator $W$. For every candidate vector $\mathbf{x'}$ in $S'$ the unitary first computes the constraint values in the workspace register $F'$ and then updates the counter register $R$. Controls on the equality blocks $H_j$ act on \emph{all} qubits of $F'$, whereas the inequality blocks $G_i$ need only the sign qubit.  The final state of $R$ encodes the number of satisfied constraints, being $m{+}p$ for feasible points.}
    \label{fig:V_operator}
\end{figure*}

The NOT+carry‑in trick avoids an extra adder since the CDKM fixed adder always works with one carry in qubit.

All CDKM inputs must have the same width.  If variable \(x_i\) is stored in \(k\) qubits but the accumulator uses \(m>k\) qubits, we \emph{sign‑extend} \(x_i\) by copying its sign bit into the \(m-k\) highest positions; this is done with CNOTs controlled by the sign qubit. All ancillas are uncomputed locally, so the transformation remains unitary.

Let us introduce the reversible operations associated to the integer linear optimization problem by means of the following unitary operators:
\begin{align} \label{function_operators}
    \begin{split}
        U_{g_i} \ket{\mathbf{x}}_{S'}\ket{0}_{F'} &:=  \ket{\mathbf{x}}_{S'}\ket{g_i (\mathbf{x})}_{F'},\\
        U_{h_j} \ket{\mathbf{x}}_{S'}\ket{0}_{F'} &:=  \ket{\mathbf{x}}_{S'}\ket{h_j (\mathbf{x})}_{F'},\\
        U_f \ket{\mathbf{x}}_{S'}\ket{0}_{F'} &:=  \ket{\mathbf{x}}_{S'}\ket{f (\mathbf{x})}_{F'}.\\
    \end{split}
\end{align}
These operators compute the functions (\ref{eq:canonicalLP}) in quantum states. 

Let us see explicitly the construction for $f(x)$ in (\ref{eq:canonicalLP}), being a general scheme for any linear function. This process is shown in Figure \ref{fig:quantum_adder}. 

First we initialize the circuit with the independent term $c_0$ with gates $X$ so we get the state $\ket{\mathbf{x}}_{S'}\ket{c_0}_{F'}$. The next step is to add the variable $x_1$. To do this, we need to take into account the sign of the coefficient. If the coefficient is positive, we perform the sign extension and apply $c_1$ times the CDKM adder. On the other hand, if the coefficient is negative, we negate the $x_1$ register with $X$ gates, set the carry in to $\ket{1}$ and perform the sign extension. We then apply $|c_1|$ times the adder and descompute the sign extension and the $X$ gates. In this way we reach the state:
\begin{equation}
    \ket{\mathbf{x}}_{S'}\ket{c_0}_{F'} \mapsto \ket{\mathbf{x}}_{S'}\ket{c_0 + c_1x_1}_{F'}.
\end{equation}
Repeating this process for each of the $n$ variables $x_i$, each with its own coefficient $c_i$, we finally arrive at the state $\ket{\mathbf{x}}_{S'}\ket{f (\mathbf{x})}_{F'}$.

Each adder gate uses \(\mathcal{O}(m)\) Toffoli gates for an \(m\)-bit adder \cite{Cuccaro:2004xxx}, which is applied to \(n\) variables $x_i$, each with its coefficient $c_i$, giving a total of \(\mathcal{O}(m\sum_i |c_i| )\) Toffoli gates.

Although these arithmetic elements do not accelerate individual operations versus classical logic, the advantage will lie in their coherent evaluation over a superposition of all \(\mathbf{x}\in\Theta\), essential for the quantum Metropolis walk developed in Sec.~\ref{sec:quantum_metropolis}.

\subsection{Quantum Circuits for Constraint Implementation}

We now design a reversible routine that, for every candidate point $\mathbf{x}$ held in the candidate state register $S'$, evaluates all $p$ equality constraints $h_j$ and all $m$ inequality constraints $g_i$ and stores the result in a dedicated counter register $R$. The role of $R$ is purely to count how many of the $(m+p)$ constraints are satisfied by~$\mathbf{x}$; it needs $\lceil\log_{2}(m+p)\rceil$ qubits. The purpose of this routine is to perform the following operation:

\begin{equation}\label{C_operator}
\ket{\mathbf{x}}_{S'}\ket{0}_{R}\;\longmapsto\;
  \begin{cases}
     \ket{\mathbf{x}}_{S'}\,\ket{m+p}_{R}, & \mathbf{x}\in\Omega,\\[4pt]
     \ket{\mathbf{x}}_{S'}\,\ket{r}_{R},   & \mathbf{x}\notin\Omega,\\
  \end{cases}
\end{equation}
where $\Omega$ is the feasible region and $r\le m+p$ is the number of constraints met by the infeasible point $\mathbf{x}$.

The registers involved are $S'$ – candidate state register, $\ket{\mathbf{x}}_{S'}$; $F'$ – workspace register used to compute each $g_i(\mathbf{x})$ or $h_j(\mathbf{x})$ via the unitary operators $U_{g_i},U_{h_j}$ and the cost function of Eq.~\eqref{function_operators}; $R$  – counter register described above.

This explicit construction will be explained in more detail in the following section in order to arrive at (\ref{C_operator}). All the following developments can be found in Figure \ref{fig:V_operator}.

\subsection{Equality constraints}

To study the equality constraints, the first step is to apply the first constraint, as follows:
\begin{equation}
    U_{h_0} \ket{\mathbf{x}}_{S'} \ket{0}_{F'} \ket{0}_{R} =  \ket{\mathbf{x}}_{S'} \ket{h_0 (\mathbf{x})}_{F'} \ket{0}_{R}.
\end{equation}
As the condition $h_0$ is an equality condition, it will be fulfilled if the $F'$ register is $\ket{0}_{F'}$. Consequently, we increment the $R$ register by 1 (resulting in $\ket{0\dots01}_{R}$) if the constraint is met; otherwise, the register remains in $\ket{0}_{R}$. For increasing the unit the \texttt{Add 1} circuit of the Figure \ref{fig:twos_complement} can be used. For this we define the operator $A_{\text{eq}}$, which acts as follows:

\begin{align}
\small
    A_{\text{eq}} & \ket{\mathbf{x}}_{S'} \ket{h_0 (\mathbf{x})}_{F'} \ket{0}_{R} := \notag\\
    &=  \begin{cases}
         \ket{\mathbf{x}}_{S'} \ket{h_0 (\mathbf{x})}_{F'} \ket{0...01}_{R} & \text{if } \ket{h_0 (\mathbf{x})}_{F'} = \ket{0}_{F'}, \\
         \ket{\mathbf{x}}_{S'} \ket{h_0 (\mathbf{x})}_{F'} \ket{0...00}_{R} & \text{if } \ket{h_0 (\mathbf{x})}_{F'} \neq \ket{0}_{F'}.
       \end{cases}
\end{align}
And this comparison between $\ket{h_0 (\mathbf{x})}_{F'}$ and $\ket{0}_{F'}$ is as simple as applying X on every qubit of $F'$, then applying Multi Controlled \texttt{Add 1} gate of Figure \ref{fig:twos_complement} controlled by all the qubits of the $F'$, and finally uncompute the transformation applying $X$ on every qubit of $F'$. This controlled operation is sometimes referred to as Anti-controlled gate, being represented as an empty circle in the Figure \ref{fig:V_operator}.

Before applying the next constraint, we must uncompute $U_{h_0}$ to reset the $F'$ register to $\ket{0}_{F'}$. This step enables us to store the value of the next function in the same register, optimizing qubit usage. This is simply applying $U_{h_0}^{\dagger}$:
\begin{equation}
    U_{h_0}^{\dagger} \ket{\mathbf{x}}_{S'} \ket{h_0 (\mathbf{x})}_{F'} \ket{r}_{R}  =  \ket{\mathbf{x}}_{S'} \ket{0}_{F'} \ket{r}_{R}
\end{equation}

For the next constraint, we apply the exact same procedure as described above. Specifically:
\begin{equation}
    U_{h_1} \ket{\mathbf{x}}_{S'} \ket{0}_{F'} \ket{r}_{R} =  \ket{\mathbf{x}}_{S'} \ket{h_1 (\mathbf{x})}_{F'} \ket{r}_{R}.
\end{equation}

Following the same logic, the constraint is satisfied if the $F'$ register remains in the state $\ket{0}_{F'}$:
\begin{align}
\small
    A_{\text{eq}} & \ket{\mathbf{x}}_{S'} \ket{h_1 (\mathbf{x})}_{F'} \ket{r}_{R} = \notag\\
    &=  \begin{cases}
         \ket{\mathbf{x}}_{S'} \ket{h_1 (\mathbf{x})}_{F'} \ket{r+1}_{R} & \text{if } \ket{h_1 (\mathbf{x})}_{F'} = \ket{0}_{F'}, \\
         \ket{\mathbf{x}}_{S'} \ket{h_1 (\mathbf{x})}_{F'} \ket{r}_{R} & \text{if } \ket{h_1 (\mathbf{x})}_{F'} \neq \ket{0}_{F'}.
       \end{cases}
\end{align}

Finally, we uncompute by applying $U_{h_1}^{\dagger}$ to reset the $F'$ register to $\ket{0}_{F'}$, thereby allowing for the evaluation of subsequent constraints:

\begin{equation}
    U_{h_1}^{\dagger} \ket{\mathbf{x}}_{S'} \ket{h_1 (\mathbf{x})}_{F'} \ket{r}_{R}  =  \ket{\mathbf{x}}_{S'} \ket{0}_{F'} \ket{r}_{R}.
\end{equation}

\begin{figure*}[t]
    \centering
    \includegraphics[width=0.85\textwidth]{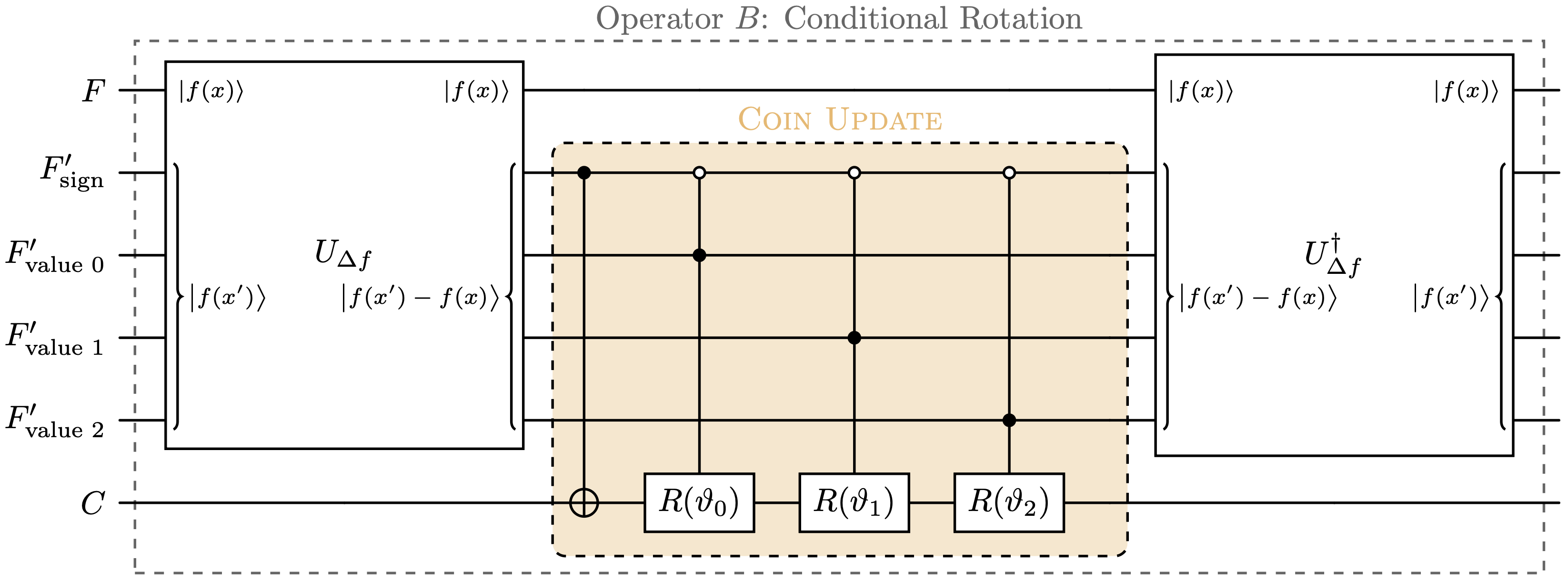}
    \captionsetup{justification=justified, font=small}
    \caption{\justifying Explicit circuit construction of the operator $B$ in Eq.~(\ref{eq:rotation}). The Metropolis acceptance amplitude is encoded into the coin register by applying controlled $R_y$ rotations conditioned qubit-by-qubit on the discretized energy-difference register $\Delta f$ stored in $F'$. For each annealing step $k$, the rotation angles $\{\theta^{(k)}_j\}$ are determined by the inverse temperature $\beta_k$ through the prescribed approximation scheme. This qubit-wise construction avoids implementing an explicit exponential subroutine while enabling a $\beta$-dependent thermalization of the acceptance rule.}
    \label{fig:B_operator}
\end{figure*}

We define the composite operator for the $j$-th constraint, denoted as $H_j$, by:
\begin{equation}
    H_j := U_{h_j}^{\dagger} A_{\text{eq}} U_{h_j}.
\end{equation}

We provide explicit constructions for each $H_j$, which implement the following transformation:
\begin{equation} \label{H_operator}
    H_j \ket{\mathbf{x}}_{S'}\ket{0}_{F'}  \ket{r}_{R} =  \begin{cases}
         \ket{\mathbf{x}}_{S'}\ket{0}_{F'}  \ket{r+1}_{R} & \text{if } \mathbf{x} \in \Omega_j, \\
         \ket{\mathbf{x}}_{S'}\ket{0}_{F'}  \ket{r}_{R} & \text{if } \mathbf{x} \notin \Omega_j.
       \end{cases}
\end{equation}
Here, $\Omega_j$ denotes the feasible region defined by the $j$-th equality constraint. This approach yields a clear modular scheme, the architecture of which is illustrated in Figure \ref{fig:V_operator}.

\subsection{Inequality constraints}

Having established the scheme for equality constraints, we now address the construction of the operators for inequality constraints, denoted by $g_{i}$. We define an operator $G_i$, analogous to $H_j$ in Eq.~(\ref{H_operator}), such that:

\begin{equation} \label{G_operator}
    G_i \ket{\mathbf{x}}_{S'}\ket{0}_{F'}  \ket{r}_{R} :=  \begin{cases}
         \ket{\mathbf{x}}_{S'}\ket{0}_{F'}  \ket{r+1}_{R} & \text{if } \mathbf{x} \in \Omega_i, \\
         \ket{\mathbf{x}}_{S'}\ket{0}_{F'}  \ket{r}_{R} & \text{if } \mathbf{x} \notin \Omega_i,
       \end{cases}
\end{equation}
where $\Omega_i$ corresponds to the feasible region satisfied by the $i$-th inequality constraint. The construction of this operator is straightforward. First, we compute the constraint function:
\begin{equation}
    U_{g_0} \ket{\mathbf{x}}_{S'} \ket{0}_{F'} \ket{r}_{R} =  \ket{\mathbf{x}}_{S'}\ket{g_0 (\mathbf{x})}_{F'} \ket{r}_{R}.
\end{equation}

Since $g_i$ represents a greater-than or equal to constraint ($\ge 0$), it is satisfied if the value in the $F'$ register is non-negative. In the two's complement representation, the most significant bit (MSB) acts as the sign bit: a state represents a non-negative value if and only if the MSB is $\ket{0}$. Consequently, the condition corresponds to the $F'$ register being in a state of the form $\ket{0 \dots}_{F'}$.

Thus, we increment the ancilla register $R$ (mapping $\ket{r}_R \to \ket{r+1}_R$) if the constraint is met, leaving it unchanged otherwise. This conditional increment is performed using the same \texttt{Add 1} circuit shown in Figure \ref{fig:twos_complement}. We define the conditional operator $A_{\text{ineq}}$ as follows:

\begin{align}
\small
    & A_{\text{ineq}}  \ket{\mathbf{x}}_{S'} \ket{g_0 (\mathbf{x})}_{F'} \ket{r}_{R} := \notag\\
    &= \begin{cases}
         \ket{\mathbf{x}}_{S'} \ket{g_0 (\mathbf{x})}_{F'} \ket{r+1}_{R} & \text{if } \ket{g_0 (\mathbf{x})}_{F'} = \ket{0 ...}_{F'}, \\
         \ket{\mathbf{x}}_{S'} \ket{g_0 (\mathbf{x})}_{F'} \ket{r}_{R} & \text{if } \ket{g_0 (\mathbf{x})}_{F'} \neq \ket{0 ...}_{F'}.
       \end{cases}
\end{align}

Verifying this condition is computationally less expensive than checking for equality, as it requires querying only the first qubit (MSB) of the $F'$ register. Implementation-wise, since the standard control activates on $\ket{1}$ but we require activation on $\ket{0}$ (positive sign), we apply an $X$ gate to the MSB, perform the controlled \texttt{Add 1}, and then apply $X$ again to restore the state. This is equivalent to an open-controlled (zero-controlled) operation.

Finally, we uncompute $U_{g_i}$ to reset the $F'$ register. The complete construction for any inequality constraint $G_i$ is given by:

\begin{equation}
    G_i := U_{g_i}^{\dagger} A_{\text{ineq}} U_{g_i}.
\end{equation}

The full constraint-checking subroutine, corresponding to Eq.~(\ref{C_operator}), is composed as follows:

\begin{equation}\label{eq:restrictions}
\small
         G_{m-1} \cdots G_1 G_0 H_{p-1} \cdots H_1 H_0 =  \prod_{i=0}^{m - 1} G_{m - 1 -i} \prod_{j=0}^{p - 1} H_{p - 1 -j},
\end{equation}
ensuring that the counter ancilla reaches the state $\ket{m+p}_{R}$ if and only if all constraints are fulfilled.

From an implementation standpoint, it is worth emphasizing that the equality–checking subroutine $H_j$ generally incurs a higher logical cost than its inequality counterpart $G_i$. In particular, testing equality requires a full multi-qubit comparison with $\ket{0}_{F'}$, which translates into multi-controlled increments on the counter register $R$ and therefore leads to a quadratic Toffoli overhead in the bit width of $F'$. By contrast, inequality constraints of the form $g_i(\mathbf{x}) \ge 0$ reduce to a sign test in two’s-complement representation and can be implemented by querying only the most significant bit, yielding a linear logical cost.

This asymmetry suggests a more resource-efficient formulation in which equality constraints are recast as pairs of inequalities.  The following statement formalizes this observation.

\begin{theorem}[Resource-efficient equality elimination]
Consider an ILP instance in canonical form~(\ref{eq:canonicalLP}) with $m \ge 0$ inequality constraints $g_i(\mathbf{x}) \ge 0$ and $p \ge 0$ equality constraints $h_j(\mathbf{x}) = 0$, and let $\Omega$ denote its feasible region. Under the constraint-encoding scheme of Section~\ref{sec:restrictions}, there exists an equivalent formulation of the same problem with
\begin{equation}
    m' \;=\; m + 2p,
\end{equation}
pure inequality constraints only, obtained by replacing each equality constraint $h_j(\mathbf{x}) = 0$ with the pair
\begin{equation}
    h_j(\mathbf{x}) \ge 0,
    \qquad
    -h_j(\mathbf{x}) \ge 0.
\end{equation}
This reformulation preserves the feasible region $\Omega$ exactly and admits a global constraint-checking operator of the form Eq.~(\ref{eq:restrictions}) that can be synthesized exclusively from $G_i$-type inequality subroutines, thereby yielding a strictly lower asymptotic Toffoli overhead than any implementation that employs both $G_i$- and $H_j$-type blocks.
\end{theorem}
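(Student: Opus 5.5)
The proof splits into three claims, which I will establish in order: feasibility is preserved, the counter-based checking operator is correct for the new system, and the cost comparison holds.

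\emph{Feasibility is preserved.} For integer-valued $h_j$, the statement $h_j(\mathbf{x})=0$ holds if and only if both $h_j(\mathbf{x})\ge 0$ and $-h_j(\mathbf{x})\ge 0$ hold, by antisymmetry of $\le$ on $\mathbb{Z}$. Hence the set defined by the $m$ original inequalities plus the $2p$ new ones coincides with $\Omega$. Each $-h_j$ again has integer coefficients, so it falls within the class of linear forms synthesizable by the adder construction of Fig.~\ref{fig:quantum_adder}. The unitary $U_{-h_j}$ is obtained simply by flipping the sign branch (the NOT-plus-carry-in trick) for every coefficient, including $c_0$.

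\emph{The checking operator is correct.} Set $g_{m+2j}:=h_j$ and $g_{m+2j+1}:=-h_j$. Since each $G_i$ acts as in Eq.~(\ref{G_operator}) and returns $F'$ to $\ket{0}$, composing them increments $R$ exactly once per satisfied constraint. Thus $\prod G_i$ realizes Eq.~(\ref{C_operator}) with $m+p$ replaced by $m'$, and the register reaches $\ket{m'}_R$ if and only if $\mathbf{x}\in\Omega$. The counter needs $\lceil\log_2(m+2p)\rceil$ qubits, at most one more than before.

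One technical caveat must be handled here: the sign-bit test is only valid if $F'$ is wide enough that $\pm h_j(\mathbf{x})$ cannot overflow modulo $2^{|F'|}$. I would take the same width that is already required for correct evaluation of $h_j$, which suffices because $|{-h_j}|=|h_j|$. The edge value $-2^{w-1}$ needs one extra guard bit.

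\emph{Cost comparison.} I expect this step to be the main obstacle, since ``strictly lower asymptotic overhead'' must be made precise. Let $w=|F'|$ and $r=|R|$. I would establish the following counts:
\begin{itemize}
\item $U_{h_j}$ and $U_{h_j}^\dagger$ each cost $\mathcal{O}(w\sum|c|)$ Toffolis, and this cost is identical for $U_{-h_j}$.
\item $A_{\text{ineq}}$ is a singly controlled increment on $R$, costing $\mathcal{O}(r^2)$ with the multi-controlled cascade of Fig.~\ref{fig:twos_complement}.
\item $A_{\text{eq}}$ is the same increment with $w$ additional controls. Each multi-controlled gate in the cascade then has up to $w+r$ controls, giving $\mathcal{O}(r(w+r))$ Toffolis; the paper summarizes this as quadratic in $w$.
\end{itemize}
The comparison then reads: replacing one $H_j$ by two $G$ blocks trades a single $\mathcal{O}(rw)$-control block for two $\mathcal{O}(r^2)$ blocks. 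In exchange it doubles the $U$ cost, which is linear in $w$ for fixed coefficients.

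To make ``strictly lower'' honest, I would state the comparison in the regime the paper uses: coefficients fixed, with $w$ and the total qubit count growing. In that regime the equality-dependent term of order $p\,w\cdot r$ (or $p\,w^2$ in the paper's accounting) dominates, while the inequality-only total remains $\mathcal{O}(w\sum|c|+r^2)$ per constraint, i.e.\ linear in $w$. I would conclude that for $p\ge1$ and $w$ large the all-$G$ formulation is asymptotically cheaper. I would also remark that the doubling of the arithmetic term affects only constants, not the order.
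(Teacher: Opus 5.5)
Your first two parts are sound and follow the paper's proof, which itself consists only of the integer equivalence $h_j=0 \iff h_j\ge 0 \wedge -h_j\ge 0$. Your check that the product of $G$ blocks drives $R$ to $\ket{m'}_R$ exactly on $\Omega$, and your construction of $U_{-h_j}$, are useful additions.

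The gap is in the cost comparison: your own counts do not give a strict asymptotic separation. Take the regime you fix: coefficients fixed, $w$ growing, and nothing forcing $r=\lceil\log_2 m'\rceil$ to grow. There, one $H_j$ costs $\Theta(w\sum|c|)+\Theta(rw+r^2)$, and its two replacement $G$ blocks cost $2\,\Theta(w\sum|c|)+\Theta(r^2)$. Both are $\Theta(w)$.

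Your sentence saying the $p\,w\,r$ term dominates while the all-$G$ total is linear in $w$ is therefore a non sequitur. The $H$ total is linear in $w$ as well. The $rw$ you save is of the same order as the $w\sum|c|$ you duplicate, so if doubling the arithmetic only affects constants, so does the saving.

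Concrete constants make this sharper. Use about $2w$ Toffolis per CDKM addition, applied $\sum|c|$ times in each of $U_{h_j}$ and $U_{h_j}^\dagger$, and $2n_c-3$ per MCX. One $H_j$ then costs roughly $4w\sum|c|+2rw+r^2$, against $8w\sum|c|+2r^2$ for the two $G$ blocks. So for large $w$ the reformulation is cheaper only when $r\gtrsim 2\sum|c|$, which typically fails.

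The paper's proof gets strictness only from its assertion that the equality test is quadratic in the width of $F'$ while the sign test is linear. It then compares the blocks at that level and never weighs the duplicated arithmetic. You lean on this assertion when you write $p\,w^2$ in the paper's accounting. But your own derivation shows that, under the linear MCX cost model the paper adopts, $A_{\text{eq}}$ is $\Theta(rw+r^2)$, not $\Theta(w^2)$. You cannot borrow the quadratic figure without justifying it, e.g.\ via an ancilla-free MCX synthesis with $\Theta(n_c^2)$ cost.

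There is also a cheaper equality test: write the flag $[F'=0]$ into one ancilla with a single $w$-controlled MCX, then do a singly controlled increment. This costs $\mathcal{O}(w+r^2)$, and against it the all-$G$ reformulation is not cheaper at all. So the "any implementation" clause cannot be proved as stated.

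To close the step, state a claim your counts actually support. One option is a strict separation for the test subroutines alone: $\Theta(rw+r^2)$ versus $\Theta(r^2)$, which is genuinely strict in $w$. Another is an explicit regime where $rw$ dominates $w\sum|c|$; this needs $r\gg\sum|c|$ and $w\gg r$, i.e.\ many constraints with bounded coefficients, not merely growing $w$. A third is a constant-factor statement conditional on an explicit inequality between $r$ and $\sum|c|$.
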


\begin{proof}
Let $h:\Theta \rightarrow \mathbb{Z}$ be an integer-valued constraint function. For any $\mathbf{x} \in \Theta$,
\begin{equation}
    h(\mathbf{x}) = 0 
    \;\Longrightarrow\;
    h(\mathbf{x}) \ge 0 \;\wedge\; h(\mathbf{x}) \le 0,
\end{equation}
which is immediate. Conversely, if $h(\mathbf{x}) \ge 0$ and $h(\mathbf{x}) \le 0$ hold simultaneously, the only integer satisfying both inequalities is $h(\mathbf{x}) = 0$. Hence,
\begin{equation}
    h(\mathbf{x}) = 0 
    \;\Longleftrightarrow\;
    \bigl(h(\mathbf{x}) \ge 0\bigr) \;\wedge\;
    \bigl(h(\mathbf{x}) \le 0\bigr),
\end{equation}
and, using $h(\mathbf{x}) \le 0 \iff -h(\mathbf{x}) \ge 0$,
\begin{equation}
    h(\mathbf{x}) = 0 
    \;\Longleftrightarrow\;
    \bigl(h(\mathbf{x}) \ge 0\bigr) \;\wedge\;
    \bigl(-h(\mathbf{x}) \ge 0\bigr).
\end{equation}

\begin{table*}[t]
\centering
\renewcommand{\arraystretch}{1.2}
\setlength{\tabcolsep}{8pt}
\small
\begin{tabular}{|c||l|c|}
\hline
\textbf{Register} & \textbf{Description} & \textbf{Qubits} \\
\hline \hline
$S$   & System register encoding $\Theta$ & $n\log_2N$ \\[2pt]
$S'$  & Possible movements & $n\log_2N$ \\[2pt]
$F$   & Function evaluation of $\ket{\textbf{x}}_S$& $\mathcal{O}(\log_2 (N\sum_k |c_k|) )$ \\[2pt]
$F'$  & Function evaluation of $\ket{\textbf{x}'}_{S'}$ & $\mathcal{O}( \log_2 (N \sum_k |c_k) )$ \\[2pt]
$R$   & Constraints counter & $\ceil{\log_2{(m')}}$ \\[2pt]
$C$   & Coin & $1$ \\[2pt]
\hline
\end{tabular}
\caption{\justifying Summary of the quantum registers used in the algorithm, including their logical meaning and the corresponding number of qubits, except for a constant number of ancillas needed for the adders. Each variable $x_i$ searches the interval $x_i \in [-N, N -1]$ for $i = 1, \ldots, n$. There are $m$ inequailty and $p$ equality constraints, with $m'=m+2p$.}
\label{tab:registers}
\end{table*}

Applying this equivalence to each equality constraint $h_j(\mathbf{x}) = 0$ shows that replacing it by the pair $h_j(\mathbf{x}) \ge 0$ and $-h_j(\mathbf{x}) \ge 0$ does not alter the set of feasible points, and therefore the feasible region $\Omega$ is preserved exactly. The reformulated instance thus contains $m' = m + 2p$ pure inequality constraints.

From an implementation perspective, Section~\ref{sec:restrictions} shows that the equality-checking subroutine $H_j$ requires a full multi-qubit comparison of the workspace register $F'$ with $\ket{0}_{F'}$, realized through multi-controlled increments on the counter register $R$. This leads to a quadratic Toffoli overhead in the bit width of $F'$. By contrast, each inequality constraint $g_i(\mathbf{x}) \ge 0$ is implemented by a $G_i$-type block that reduces to a sign test in two’s-complement representation, querying only the most significant bit of $F'$, and therefore incurs a linear Toffoli cost in the same bit width.

Replacing every equality block $H_j$ with two inequality blocks of type $G_i$ thus transforms each quadratic-cost subroutine into two linear-cost 
ones. At the level of the global constraint-checking operator Eq.~(\ref{eq:restrictions}), this modification strictly reduces the 
leading asymptotic Toffoli overhead while leaving all other circuit components unchanged. Consequently, the overall constraint-checking 
stage exhibits a strictly lower asymptotic cost, while the feasible region $\Omega$ remains identical.
\end{proof}

\section{Quantum Metropolis Approach} \label{sec:quantum_metropolis}

With the constraint-evaluation subroutines in place (Section~\ref{sec:restrictions}), we now move from the high-level description of the algorithm to its explicit circuit-level realization. As introduced in Section~\ref{sec:overview}, the one-step update operator is the unitary $W$ in Eq.~(\ref{eq:W_overview}), acting on the joint register space $\{S,S',F,F',R,C\}$ (see Section~\ref{sec:overview} and Table~\ref{tab:registers}), and repeated applications of this step induce a discrete-time quantum walk whose measurement statistics reproduce Metropolis--Hastings acceptance filtering.

In this section, we present the concrete construction of the walk, refining the architecture proposed in~\cite{Campos2023, Escrig2023, Escrig2025} to support explicit constraint handling. We specify each constituent subroutine---proposal generation, acceptance-amplitude encoding, conditional state transition, and reflection---and discuss the logical resources required to implement them as reversible quantum circuits.
Figure~\ref{fig:algorithm_outline} provides a schematic overview of one full application of the step operator and its register flow.

Operationally:

\begin{enumerate}[label=\textbf{\alph*}), leftmargin=1.8em, itemsep=2pt]
\item \textbf{Preparation (\(P=B\,V\)).}
      \begin{itemize}[leftmargin=1.6em, itemsep=1pt]
        \item[·] \emph{Proposal (\(V\)).}  Generates a superposition of candidate states 
              \(\mathbf{x}'\in\Theta\) in the move register \(S'\) and evaluates
              \(\bigl(f(\mathbf{x}'),\,r\bigr)\) into \((F',R)\),
              with \(r\) counting satisfied constraints.
        \item[·] \emph{Balance (\(B\)).}  Controlled on the system and function registers 
              \((S,F)\) and \((S',F')\), encodes the acceptance amplitude
              \(\sqrt{A(\mathbf{x},\mathbf{x}')}\) onto the coin \(C\)
              via conditional rotations, where
              \(A=\min\{1,e^{-\beta\Delta f}\}\).
      \end{itemize}
\item \textbf{Shift (\(F\)).}  Performs a conditional swap
      \((S,F)\leftrightarrow(S',F')\), triggered only if the coin register \(C\) is in \(\ket{1}_C\) 
      and the constraint counter \(R\) is in \(\ket{m'}_R\). This effectively 
      accepts the proposed move if and only if it is feasible and statistically accepted.
\item \textbf{Reflection (\(R\)).}  Applies the phase flip given by the reflection              \(2\ket{0}\!\bra{0}_{S'}\!\otimes\!\ket{0}\!\bra{0}_{C}-I\).
\end{enumerate}

We now examine each operator individually to show how it is implemented.

\subsubsection{Candidate Generation (Proposal)}

The first stage of the Quantum Walk involves the application of the proposal operator $V$. This operator is responsible for generating a superposition of all potential candidates and evaluating their feasibility and cost. The constituent operations are schematically illustrated in Figure \ref{fig:V_operator}. 

The process begins by creating a uniform superposition over the entire search space (representing all representable integers with $d$ qubits):
\begin{equation}
    H^{\otimes d} \ket{0}_{S'}  = \frac{1}{\sqrt{2^d}} \sum_{k=-2^{d-1}}^{2^{d-1} - 1} \ket{k}_{S'}.
\end{equation}

Since not all generated states correspond to feasible solutions (i.e., some may violate constraints), we subsequently apply the constraint-counting subroutine defined in Section \ref{sec:restrictions}, Eq.~(\ref{eq:restrictions}). This procedure, illustrated in Figure \ref{fig:quantum_adder}, increments the register $R$ for each constraint satisfied by the state. Finally, we compute the objective function value, yielding the state:
\begin{equation}
         V\ket{0}_{S'} \ket{0}_{F'} \ket{0}_{R} = \frac{1}{\sqrt{|\Theta|}} \sum_{\mathbf{x}\in\Theta} \ket{\mathbf{x}}_{S'}  \ket{f(\mathbf{x})}_{F'} \ket{r(\mathbf{x})}_{R},
\end{equation}
where $r(\mathbf{x})$ denotes the number of constraints satisfied by the candidate $\mathbf{x}$. Given that this operator relies primarily on arithmetic addition circuits, the gate complexity—specifically the Toffoli count—is expected to scale linearly with the number of constraints \cite{Cuccaro:2004xxx}.

\subsubsection{Conditional Rotations}

Once the superposition of candidate moves has been generated by $V$, the acceptance probabilities are encoded into the coin register via the operator $B$, which is shown schematically in Figure \ref{fig:B_operator}. The first step involves subtracting the function value of the current state $S$ from that of the candidate state $S'$:

\begin{align}
\small
     \ket{\mathbf{x}}_{S} \ket{f(\mathbf{x})}_{F} &\ket{\mathbf{x}'}_{S'}  \ket{f(\mathbf{x}')}_{F'}  \mapsto \notag\\
    & \ket{\mathbf{x}}_{S} \ket{f(\mathbf{x})}_{F} \ket{\mathbf{x}'}_{S'}  \ket{f(\mathbf{x}') - f(\mathbf{x})}_{F'},
\end{align}
performing the subtraction using the same arithmetic logic described in Section \ref{sec:restrictions} for negative coefficients. The purpose of this step is to encode the acceptance probability as a quantum amplitude:

\begin{align}
    \small
    \ket{\mathbf{x}}_{S} &\ket{\mathbf{x}'}_{S'} \ket{0}_{C} \mapsto \notag\\
    & \ket{\mathbf{x}}_{S}\ket{\mathbf{x}'}_{S'}[\sqrt{1 - A(\mathbf{x}, \mathbf{x}')}\ket{0}_{C} + \sqrt{A(\mathbf{x}, \mathbf{x}')}\ket{1}_{C}] .
\end{align}
Here, the acceptance probability is encoded in the probability amplitude of the state $\ket{1}_C$, while the rejection probability corresponds to the amplitude of $\ket{0}_C$. Formally, this implements the rotation transformation:
\begin{equation} \label{eq:rotation}
\small
    R(\vartheta) =  \left(\begin{matrix}
        \sqrt{1-A(\mathbf{x}, \mathbf{x}')} &  -\sqrt{A(\mathbf{x}, \mathbf{x}')}\\
        \sqrt{A(\mathbf{x}, \mathbf{x}')} & \sqrt{1-A(\mathbf{x}, \mathbf{x}')}  
        \end{matrix} \right),
\end{equation}
defined by the rotation angle:
\begin{equation}
    \vartheta = \arcsin \left( \sqrt{A(\mathbf{x}, \mathbf{x}')}\right) = \arcsin \left( e^{-\frac{\beta}{2}[f(\mathbf{x}') - f(\mathbf{x})]}\right).
\end{equation}

To implement the conditional rotation shown in Fig.~\ref{fig:B_operator}, there is no need to compute the full exponential operator. Exponentiation constitutes a costly arithmetic operation in quantum circuits and can be avoided here, since it is not required for the optimization process (see Appendix~\ref{ap:spectral_decomposition}, Theorem~\ref{th:aprox_funct}). This avoids the heavy overhead of exponentiation and instead performs a qubit-wise interpolation between the binary-encoded components of the register $F'$. 

The Metropolis acceptance probability, defined in Eq.~(\ref{eq:acc_prob_overview}), satisfies $p_{\mathrm{acc}} = 1$ whenever the energy difference is non-positive ($\Delta f \leq 0$). Consequently, the non-trivial exponential decay need only be computed when the energy increases, i.e., for $\Delta f > 0$.

The register $F'$ encodes the cost function difference $\Delta f = f(\mathbf{x}') - f(\mathbf{x})$ using two's complement binary representation:

\begin{figure*}[t]
  \centering
  \includegraphics[width=0.7\textwidth]{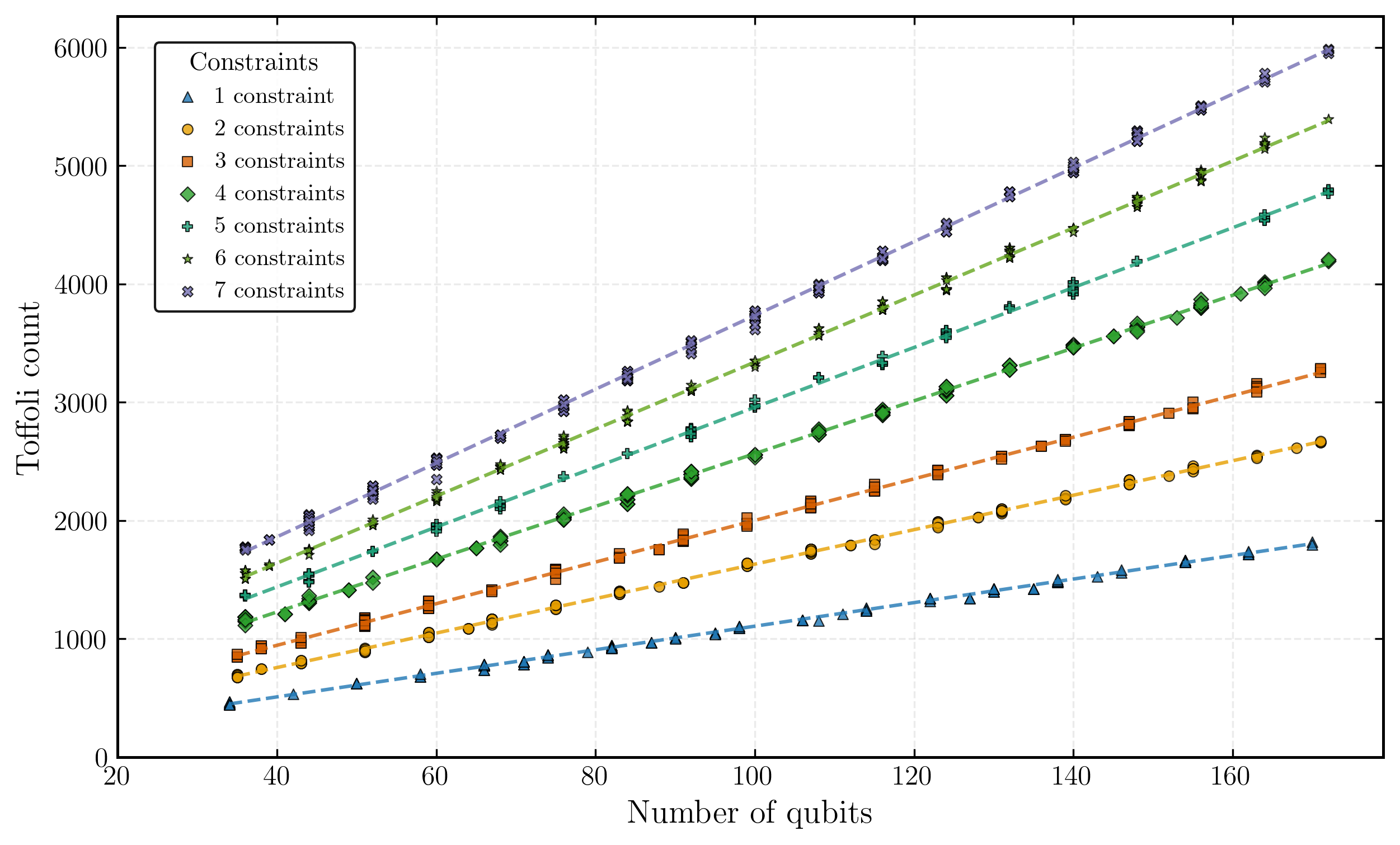}
  \caption{\justifying Toffoli-equivalent gate count versus total qubit count across 1000 randomly generated problem instances with $n=3$ variables, fixed inverse temperature $\beta=1$, and varying numbers of linear inequality constraints $m'$. Each color and marker shape corresponds to a distinct constraint count. The dashed lines show linear regression fits, confirming the predicted $\mathcal{O}(k)$ scaling with the total number of qubits $k$. The systematic increase in slope with $m'$ reflects the additional Toffoli overhead introduced by constraint-processing logic.}
  \label{fig:num_toffolis}
\end{figure*}

\begin{equation}
    y_{\mathrm{bin}} = b_{k-1} b_{k-2} \cdots b_{1} b_{0},
\end{equation}
where $b_i \in \{0,1\}$ denotes the $i$-th bit, and $b_{k-1}$ acts as the sign bit. In this encoding, $b_{k-1}=1$ indicates $\Delta f < 0$, while $b_{k-1}=0$ implies $\Delta f \geq 0$. Accordingly, the circuit logic is designed such that the exponential update branch is activated only when the sign bit is in the state $\ket{0}$ (representing positive energy differences), as illustrated in Fig.~\ref{fig:B_operator}.

To approximate the target rotation induced by the term $e^{-\frac{\beta}{2}\Delta f}$, we construct a linear mapping from the binary-encoded values in $F'$ to the rotation angles of the coin register. Instead of evaluating the nonlinear exponential, we define a linear interpolation that reproduces its behaviour over the discretized domain of $\Delta f$.  
Specifically, for $\Delta f > 0$, we adopt the linear model:
\begin{equation}
    e^{\frac{\beta}{2}\Delta f} \;\approx\;
    \lambda_0 b_0 + \lambda_1 b_1 + \cdots + \lambda_{k-2} b_{k-2},
\end{equation}
where $b_i \in \{0,1\}$ are the data qubits of $F'$, and $\{\lambda_i\}$ are real coefficients derived from an optimized linear fit. Crucially, these coefficients are constrained to guarantee monotonicity, a property sufficient to preserve the algorithm's thermalization behavior. Theorem~\ref{th:aprox_funct} in Appendix~\ref{ap:spectral_decomposition} provides the formal proof of validity. Note that this fit depends solely on the register size and the annealing schedule $\beta_k$, making it independent of the specific instance values or the functional form of the objective $f(\mathbf{x})$.

It is worth emphasizing that the purpose of this linearization is not to reproduce the exponential function with high numerical precision, but rather to preserve the correct thermal ordering of states according to their relative energies. In the Metropolis acceptance rule, the exponential factor $e^{-\beta \Delta f / 2}$ acts as a weighting term that suppresses transitions toward higher-energy configurations. Therefore, the algorithm only requires a monotonic mapping between $\Delta f$ and the rotation amplitudes in Eq.~(\ref{eq:rotation}) to ensure that $\Delta f_1 > \Delta f_2 \Rightarrow A(\mathbf{x}_1, \mathbf{x}'_1) < A(\mathbf{x}_2, \mathbf{x}'_2)$. Provided this monotonic relation holds, the random walk continues to bias the system toward the low-energy subspace. While the stationary distribution may deviate slightly from the exact Gibbs state due to the approximation, the ground-state support remains robust. The global convergence, governed by the spectral gap of $W_k$, is maintained under such smooth, monotonic reparameterizations.

After the rotation block is applied, the register $F'$ is uncomputed by applying $U_{\Delta f}^{\dagger}$, restoring it to $\ket{f(\mathbf{x}')}_{F'}$. The resulting transformation corresponds precisely to the conditional-rotation operator $B$, shown in Fig.~\ref{fig:B_operator}, which performs a qubit-wise linearized update of the coin conditioned on the sign and magnitude of $\Delta f$. 

From a resource perspective, the operator $B$ relies on two fundamental primitives: arithmetic subtractions and controlled rotations. The subtraction is implemented via ripple-carry adders, which incur a cost linear in the number of qubits. Similarly, by employing the linear approximation, we replace the costly exponentiation with a sequence of single-qubit controlled rotations, exactly one per qubit in the $F'$ register. Consequently, the total gate complexity is dominated by the Toffoli count, which scales linearly with the system size.

\subsubsection{Transition movements}\label{subsec:trans_movem}
At this stage, the conditional update of the state register $\ket{\mathbf{x}}_S$ is performed based on the acceptance and feasibility criteria:
\begingroup\footnotesize
\setlength{\jot}{2pt}
\begin{equation}
\begin{aligned}
F &\;\ket{\mathbf{x},f(\mathbf{x})}_{SF}\,\ket{\mathbf{x}',f(\mathbf{x}')}_{S'F'}\,\ket{\varphi}_C\ket{r}_R \\[2pt]
  &=
  \begin{cases}
    \ket{\mathbf{x}',f(\mathbf{x}')}_{SF}\,\ket{\mathbf{x},f(\mathbf{x})}_{S'F'} , 
      & \text{if } \ket{\varphi}_C\ket{r}_R = \ket{1}_C\ket{m'}_R,\\[2pt]
    \ket{\mathbf{x},f(\mathbf{x})}_{SF}\,\ket{\mathbf{x}',f(\mathbf{x}')}_{S'F'} ,
      & \text{otherwise.}
  \end{cases}
\end{aligned}
\end{equation}
\endgroup
for an ILP instance with $m'$ constraints. This operation is implemented via a sequence of Controlled-SWAP gates acting on the $S$ and $F$ registers, conditioned on the coin register $\ket{\varphi}_C$ and the constraint counter $\ket{r}_R$, as illustrated in Figure \ref{fig:algorithm_outline}.

Since a controlled-SWAP gate incurs a Toffoli cost that scales linearly with the number of control qubits, the constraint register plays a decisive role in the complexity of the operator $F$. Because the feasibility check is encoded through a cascade of $\lceil \log_{2} m'\rceil$ control qubits, where $m'$ is the number of constraints, the resulting multi-controlled SWAP determines the effective slope of the Toffoli count associated with this block. In other words, the constraint cardinality governs the dominant non-Clifford contribution of $F$, with a logarithmic scaling with the number of constraints.

\subsubsection{Reflection}
Finally, the reflection about the zero state $\ket{0}_{S'} \ket{0}_C$ is applied. This operator acts as follows:
\begin{align}
    R &\ket{\mathbf{x}'}_{S'} \ket{\varphi}_C  \notag \\
    &= \begin{cases}
         \ket{0}_{S'} \ket{0}_C & \text{if } \ket{\mathbf{x}'}_{S'} \ket{\varphi}_C = \ket{0}_{S'} \ket{0}_C, \\
         -\ket{\mathbf{x}'}_{S'} \ket{\varphi}_C & \text{otherwise.}
       \end{cases}
\end{align}
The operator is explicitly defined as $R = 2\ket{0}\bra{0}_{S'}  \otimes \ket{0}\bra{0}_C - I_{S'} \otimes I_C$, acting as the identity on all other registers. 

Physically, this transformation implements a phase flip conditioned on the subspace orthogonal to the zero state. Implementation-wise, it requires a multi-controlled operation that targets the entire $S'$ and $C$ registers (specifically, a zero-controlled phase shift). Since such a gate decomposes into a number of Toffoli-equivalent operations that scales linearly with the number of control qubits, the overall cost of the reflection step exhibits a clean linear dependence on the size of the state register.  

\begin{figure*}[t]
  \centering
  \includegraphics[width=0.7\textwidth]{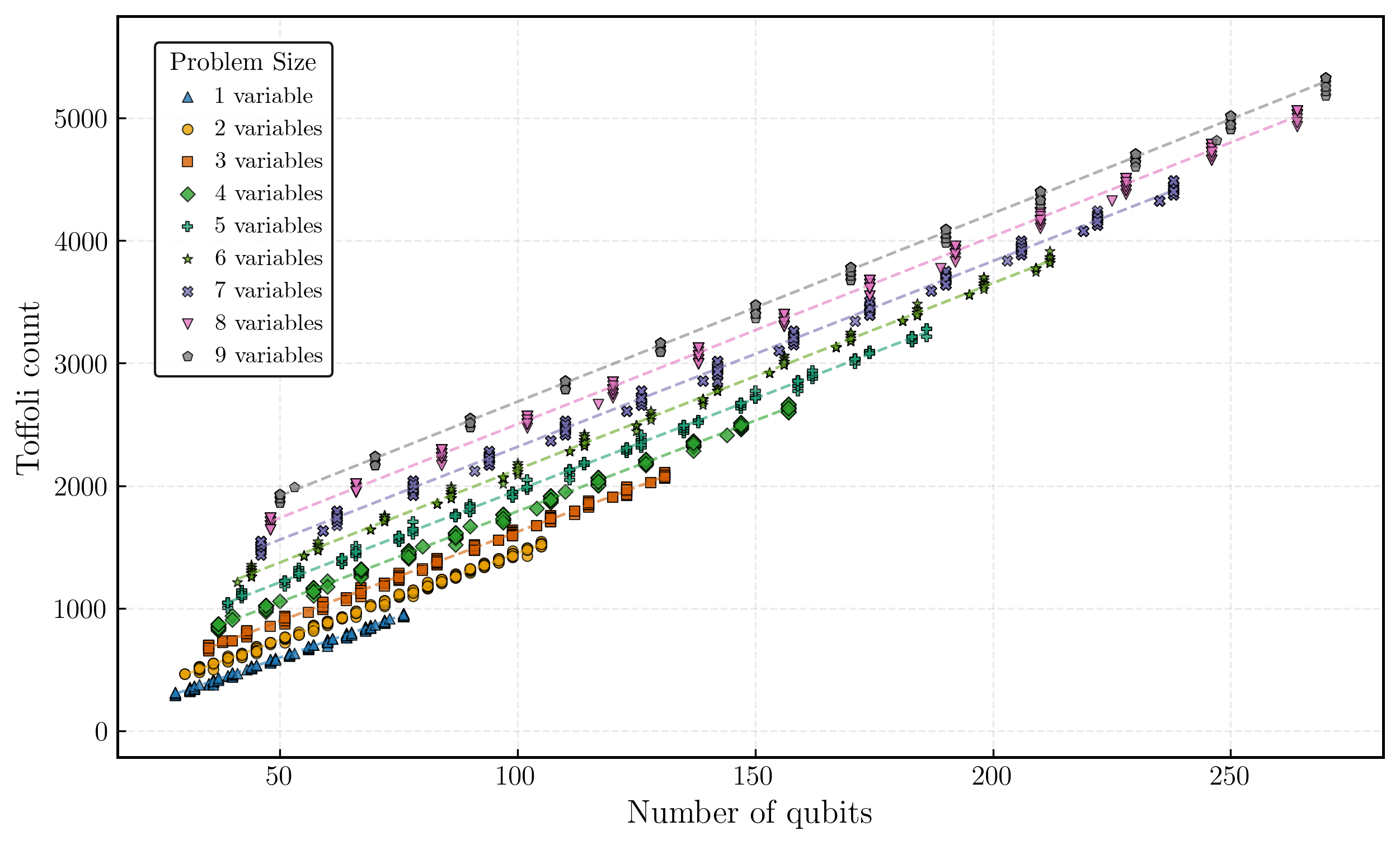}
    \caption{\justifying Toffoli gate counts versus total number of qubits for 1000 problem instances with $m'=2$ constraints of varying dimensionalities for the same constant $\beta = 1$. 
    Each color and marker shape corresponds to a different number of variables. 
    The dashed lines represent linear regression fits, showing a clear linear dependence of the Toffoli count on the total number of qubits with the same slope.}
    \label{fig:toffoli_vars}
\end{figure*}

Taken together, the four components of the walk operator—proposal ($V$), conditional rotation ($B$), transition ($F$), and reflection ($R$)—exhibit a uniform and highly structured resource behaviour. Each block is dominated by multi-controlled arithmetic or comparison routines, whose Toffoli-equivalent cost scales linearly with the number of qubits assigned to the corresponding registers. Thus, at the structural level, one expects the logical cost of a single Metropolis–Hastings update $W$ to grow proportionally to the overall circuit width.

A precise expression for the total number of qubits is shown directly from the register allocation in Table~\ref{tab:registers}. Let $d=\log_2 N$ be the number of qubits used to encode each variable representing the search space, $n$ the number of variables, and $m'$ the total number of constraints. The dominant contributions are:
\begingroup\footnotesize
\setlength{\jot}{4pt}
\begin{equation}\label{eq:aprox_num_qub}
Q_{\mathrm{total}}
= \mathcal{O}(n \log_2 N)+ \mathcal{O}(\log_2 (N\sum_k |c_k|) )+\mathcal{O}(\log_2(m'))
\end{equation}
\endgroup
Hence, $Q_{\mathrm{total}} = \Theta(n\log_2 N)$ with only logarithmic corrections in the number of constraints and small additive terms from the arithmetic registers.  

Because the search interval grows exponentially with $d$ ($x_i\in[-2^{d-1},\,2^{d-1}-1]$), this implies that the total qubit requirement scales only logarithmically with the size of the discrete search space.  

In summary, the walk operator $W$ exhibits a structurally linear dependence on the encoded ILP dimension—both in qubits and in Toffoli-equivalent gates.  Section~\ref{sec:results} confirms this behaviour numerically across thousands of randomly generated ILP instances.

\section{Simulation Results} \label{sec:results}

Crucially, the results presented in this section rely on a complete, end-to-end implementation of the proposed algorithm, validated via gate-level simulation using standard circuit synthesis tools. By explicitly programming the full quantum circuit—rather than relying solely on analytical bounds or high-level functional simulations—we are able to conduct a global assessment that validates two complementary aspects simultaneously: physical resource scaling and algorithmic convergence fidelity.

\subsection{Problem Size Scaling}

\subsubsection{Metrics and simulation setup}\label{subsec:setup}

While an analytical asymptotic analysis provides valuable insight into the expected scaling behavior, a precise and implementation-independent measure can only be obtained by explicitly counting the gates required to realize the quantum Metropolis operator $W$. Therefore, in this section we adopt a numerical approach: the full algorithm has been implemented, and the exact number of gates in each constituent operator of the walk $W$ is extracted to empirically characterize the resource cost.

To ensure a fair comparison between different configurations, in this paper we quantify the $T$-gate cost primarily at the level of Toffoli-equivalent gates. This metric is motivated by the fact that Clifford operations (such as Hadamard or CNOT) are relatively simple to implement and exhibit low physical error rates, whereas non-Clifford gates, particularly multi-controlled operations, dominate the fault-tolerant resource cost in realistic quantum architectures \cite{Wang2025, Ortega2025}. Therefore, expressing the total gate count in Toffoli-equivalents provides a more meaningful indicator of the true computational overhead.

\begin{figure*}[t]
  \centering
  \includegraphics[width=0.65\textwidth]{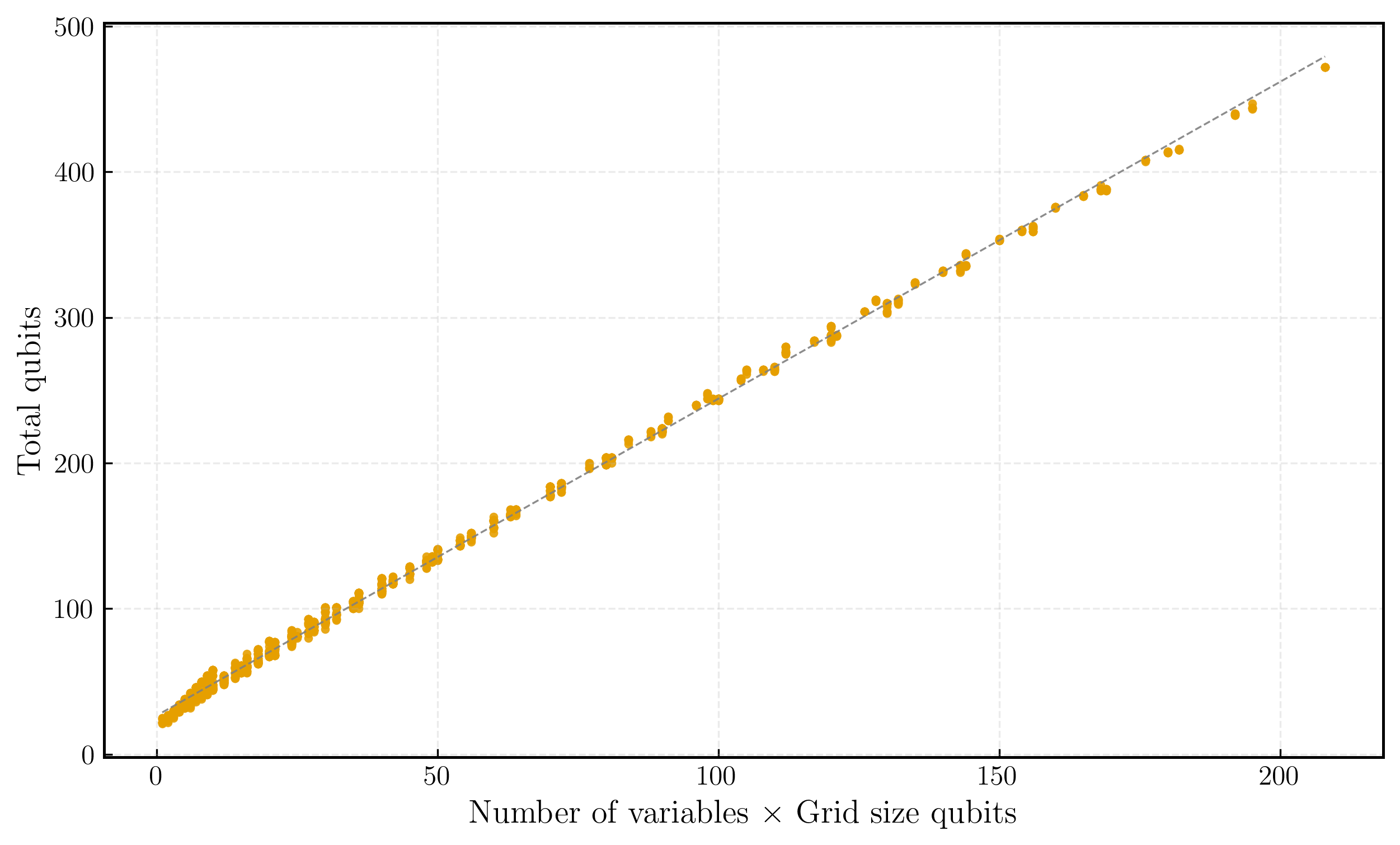}
  \caption{\justifying Total qubit count versus the product $n d$ (number of variables times discretization qubits per variable) across 1500 randomly generated problem instances. Each variable is encoded using a $d$-qubit two’s-complement representation. The dashed line shows a linear regression fit ($R^2>0.99$), confirming the predicted $\mathcal{O}(nd)$ scaling dominated by the system and proposal registers $S$ and $S'$. The fitted slope also incorporates additive contributions from the auxiliary registers $(F,F',R,C)$, illustrating a predictable and nearly affine growth of the total qubit requirement with the encoded problem size.}
  \label{fig:num_qubits}
\end{figure*}

In this work, the following Toffoli-equivalent cost model is adopted \cite{Nielsen_Chuang_2010, Ortega2025, RevModPhys.74.347}. Each multi-controlled X gate (MCX) with $n_c$ control qubits contributes  
\begin{equation}
C_{\mathrm{MCX}} = 2n_c - 3,
\end{equation}
Toffoli-equivalents.  

Controlled unitaries (CU) follow a similar scaling, as their non-Clifford cost increases linearly with the number of control qubits \cite{Ortega2025}:
\begin{equation}
C_{\mathrm{CU}} = 2n_c - 2.
\end{equation}

Controlled-SWAP gates are automatically decomposed into MCX operations during synthesis; consequently, their resource footprint is captured directly via the total MCX count and depth. By definition, a standard Toffoli (CCX) gate corresponds to exactly one Toffoli-equivalent.

Regarding single-qubit rotations, the cumulative contribution of $T$ and $T^\dagger$ gates is estimated based on a standard distillation overhead of one Toffoli-equivalent per seven $T$-type gates:
\begin{equation}
C_{T} = \frac{N_T + N_{T^\dagger}}{7},
\end{equation}
where $N_T$ and $N_{T^\dagger}$ denote the total counts of $T$ and $T^\dagger$ gates, respectively.

All remaining Clifford gates, including CNOT and Hadamard are considered to have negligible Toffoli cost. 

Each multi-controlled operation is thus mapped to its Toffoli-equivalent contribution, capturing the dominant non-Clifford component of the total circuit depth and logical resource requirements. This counting scheme provides a unified and hardware-agnostic measure of logical complexity, independent of the specific gate decomposition or hardware connectivity.

To apply this cost model, the full algorithm is implemented using the Qiskit framework \cite{qiskit2024}. We utilize standard synthesis routines to decompose the high-level circuit components into the specific fundamental operations defined above—including MCX gates, single-qubit $T/T^\dagger$ rotations, and Clifford primitives. This breakdown allows for a direct mapping between the synthesized circuit topology and the Toffoli-equivalent metrics, enabling a precise and deterministic calculation of the total resource overhead.

We consider families of integer linear programs of the form in Eq.~(\ref{eq:canonicalLP}), with \( n \in [1, 9] \) variables and \( m' \in [1, 7] \) linear constraints. Coefficients \( c_i \) and constraint entries \( g_i, h_j \) are randomly sampled from bounded integer intervals, transforming the equality constraints into two inequality constraints, and each variable is represented using \( d \)-qubit two’s-complement encoding. For each instance, we record the total number of qubits required and the equivalent Toffoli count. Among the possible annealing steps, we focus specifically on the first iteration of the quantum Metropolis operator \( W \), which captures the fundamental cost and scaling characteristics of the algorithm under identical thermalization conditions, with a fixed inverse temperature \(\beta = 1\). A more detailed study of the effect of the annealing schedule and the number of iterations on convergence and accuracy is presented in Sec.~\ref{subsec:precision}.

\subsubsection{Simulation results}

\begin{figure*}[t]
  \centering
  \includegraphics[width=\textwidth]{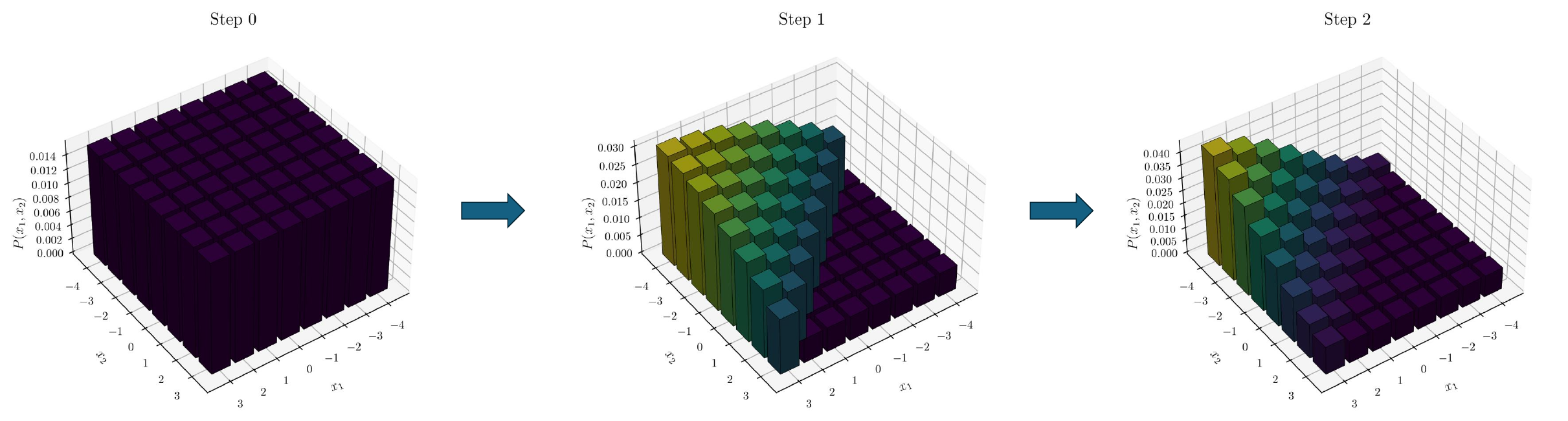}
    \caption{\justifying Histogram of the measured probability distribution obtained after applying the \textit{Randomized Quantum Metropolis} algorithm for two variables and three discretization qubits per variable. 
    Each bar represents the probability of measuring a specific computational basis state $\ket{x_1, x_2}$, corresponding to a discrete point in the feasible region defined by the constraint $-x_1 + x_2 \ge 0$. 
    The optimization objective is $f(x_1, x_2) = -2x_1 - 2x_2$, and the resulting distribution illustrates the progressive thermalization process toward the global minimum. 
    With three qubits per variable, the search domain spans the square $[-4,3]\times[-4,3]$.}
  \label{fig:3_qubits_inference}
\end{figure*}

Figure~\ref{fig:num_toffolis} illustrates the scaling of the Toffoli-equivalent gate count as a function of the total qubit register size. These results, derived from 1000 random instances per configuration at $\beta = 1$, reveal a strictly linear relationship between logical depth and circuit width across all regimes.

Crucially, while the scaling remains linear, the slope of the regression fit steepens systematically with the number of linear constraints $m'$. This behavior directly reflects the resource overhead associated with the constraint-checking logic—specifically, the multi-controlled operations required by the conditional update operator $F$ (as detailed in Sec.~\ref{subsec:trans_movem}). Consequently, the data confirms that increasing the constraint density contributes additively to the logical cost coefficient, without altering the fundamental linear scaling of the one step operator $W$.

Complementing this analysis, Fig.~\ref{fig:toffoli_vars} isolates the effect of the number of variables by fixing the constraint count and varying the problem dimensionality. In this regime, the Toffoli-equivalent count continues to exhibit a linear dependence on the total number of qubits, with a slope that remains essentially invariant across different values of $n$. 
Increasing the number of variables results in a vertical upward shift of the regression lines—indicating a larger baseline circuit cost—but does not alter the asymptotic scaling rate. 
This behavior indicates that the number of variables affects the offset of the resource count, while the dominant scaling is governed by the total qubit budget rather than by $n$ alone.

Figure~\ref{fig:num_qubits} addresses the spatial complexity of the construction by plotting the total number of qubits as a function of the product between the number of variables $n$ and the number of discretization qubits $d$ used to encode each variable in two’s-complement form. The data exhibit a linear dependence, confirming that the total qubit requirement scales proportionally with the effective encoded problem size $n d = n \log_2 N$, where $N$ denotes the size of the discretized domain per variable. Minor deviations from ideal linearity arise from secondary structural contributions, such as ancillary registers used for constraint evaluation and the bit-width required to store intermediate arithmetic results in the $F'$ register. These effects introduce only mild dispersion and do not modify the underlying linear trend, in agreement with the analytical register estimates summarized in Table~\ref{tab:registers} and Eq.~(\ref{eq:aprox_num_qub}).

Taken together, Figs.~\ref{fig:num_toffolis}, \ref{fig:toffoli_vars}, and~\ref{fig:num_qubits} establish that both the logical depth (measured in Toffoli-equivalent gates) and the spatial complexity (measured in total qubits) of the quantum Metropolis operator $W$ scale linearly with the encoded problem size. 
While the classical search space grows exponentially with the discretization precision as $[-2^{d-1},\,2^{d-1}-1]$, the required quantum resources scale only as $\mathcal{O}(n d)$. 
This separation between exponential classical complexity and polynomial quantum resource growth highlights a key advantage of the proposed formulation: it enables coherent exploration of an exponentially large feasible region while maintaining predictable and scalable circuit complexity.

Error bars are omitted from the reported figures, as the objective of the analysis is to reveal asymptotic scaling behavior rather than statistical variability. The quantities plotted correspond to deterministic circuit constructions, and the observed trends directly reflect algorithmic complexity bounds rather than stochastic fluctuations.

\subsection{Precision and convergence analysis} \label{subsec:precision}

In this section we move beyond the analysis of individual operators and simulate the complete algorithm described in Sec.~\ref{sec:overview}. 

The annealing schedule implemented in our simulations starts from a fully ``hot'' configuration at $\beta = 0$ and gradually cools toward $\beta \rightarrow \infty$. 
This is achieved by combining a linear increase of $\beta$ in the early stages ($0 \le \beta \le 1$) with an exponential decay of the acceptance amplitude. 
Each thermal block consists of a randomly selected number of internal repetitions, $T \in [1,3]$, corresponding effectively to $q=2$ qubits of randomization depth in the quantum walk.  

Intuitively, when consecutive $\beta$ values are closer—i.e., when the temperature decreases more smoothly—the system requires fewer randomization steps to maintain detailed balance between successive Gibbs-like states.  
However, this also implies that more distinct $\beta_k$ stages (larger $Q$) must be executed to reach the same final temperature.  
Conversely, a faster cooling schedule reduces the total number of annealing stages but requires more mixing per stage to achieve comparable convergence.  
The chosen schedule thus represents a practical heuristic trade-off between coherence, convergence stability, and total circuit depth, tuned to each problem’s energy landscape and constraint configuration.

Figure~\ref{fig:3_step_inference} illustrates the evolution of the probability distribution throughout the annealing process for a representative instance of the integer linear program. The depicted case corresponds to an instance with two optimization variables, each encoded on two qubits in two’s-complement form, defining a discrete search space of $[-2, 1] \times [-2, 1]$. 
In subsequent experiments shown Fig. ~\ref{fig:3_qubits_inference}, we increase the discretization to three qubits per variable, expanding the hypercubic search domain to $[-4, 3]\times[-4, 3]$ and thereby raising the maximum qubits of the qiskit simulators.

The algorithm employs a $\beta$-scheduling scheme that gradually increases the inverse temperature, effectively performing a quantum analogue of simulated annealing. As $\beta$ grows, the distribution concentrates progressively around configurations with lower energy values, until it converges with high probability to the global or local minimum of the cost landscape. 
This thermalization process is fully coherent and reversible, while partial measurements at selected steps remove entanglement with ancillary registers, effectively purging computational “garbage” and preserving only the physically relevant amplitudes.

A distinctive feature of this quantum Metropolis formulation is that, unlike amplitude-amplification schemes such as Grover’s algorithm, the probability of measuring the optimal state does not exhibit oscillatory behavior as a function of the number of iterations. In practice, the repeated application of distinct operators $W_k$, each corresponding to a progressively lower effective temperature, combined with the partial measurements of the coin and movement registers, leads to a steady concentration of probability around low-energy configurations. 
While a strict proof of monotonicity is beyond the present scope, the numerical simulations consistently show a non-oscillatory, cumulative increase in the occupation probability of near-optimal states. 
This behavior can be interpreted as a consequence of measurement-induced state purification together with the $\beta$-dependent cooling schedule, which together suppress transitions that would otherwise repopulate higher-energy configurations. 
Hence, even though the optimal number of $W_k$ applications cannot be fixed a priori—since it depends on the specific cost landscape and annealing path—the observed convergence remains stable and qualitatively monotonic across all tested instances. The key advantage is that, in the quantum case, the process exhibits a quadratic gain in convergence rate with respect to classical random-walk mixing \cite{PhysRevLett.101.130504}, while retaining the same qualitative monotonic behavior of probability concentration.

In summary, the results shown in Fig.~\ref{fig:3_step_inference}  and Fig. ~\ref{fig:3_qubits_inference} demonstrate how the proposed quantum Metropolis dynamics progressively amplifies the occupation probability of the optimal solution as annealing proceeds, evidencing both the correctness of the $\beta$-dependent construction and the robustness of the partial measurement mechanism in maintaining a physically consistent thermal evolution.

\section{Conclusions} \label{sec:conclusions}
This work introduces the first fully quantum Metropolis--Hastings algorithm for integer linear programming, implemented entirely as reversible quantum circuits without quantum-RAM assumptions or classical pre/post-processing. 
This represents a genuine advance in quantum constraint programming: all arithmetic—objective evaluation, constraint verification, acceptance filtering—occurs coherently within the quantum circuit, enabling simultaneous evaluation of candidate solutions across the entire feasible polytope in superposition. 
This capability is fundamentally unavailable in hybrid approaches (e.g., quantum oracles with classical search) or variational methods (QAOA, VQE) that require repeated quantum--classical feedback loops.

A central contribution of this study is the explicit resource characterization: one step of the algorithm requires $\mathcal{O}(n \log_2 N)$ logical qubits and $\mathcal{O}(k)$ Toffoli-equivalent gates, where $n$ is the number of variables, $N$ the search interval, and $k$ the total qubit count. 
This linear scaling with problem dimension—despite the exponential size of the classical search space—is rigorously derived and validated numerically across more than 1500 instances (see Figs.~\ref{fig:num_toffolis}, \ref{fig:toffoli_vars}, and \ref{fig:num_qubits}). 
Crucially, while the total time to convergence depends on the spectral gap—reflecting the NP-hard nature of the problem—the \textit{implementation cost per step} is guaranteed to remain polynomial. 
Unlike classical branch-and-bound methods where worst-case memory and branching costs can become intractable, our framework offers predictable, linear scaling of the quantum hardware resources required to execute the Metropolis operator.

Beyond finding the global optimum, the algorithm naturally generates a structured ranking of near-optimal solutions weighted by their Boltzmann probability. 
This capability is particularly valuable in practical applications such as scheduling, resource allocation, and design optimization, where high-quality feasible solutions are often preferred over exhaustive optimization. 
This behavior distinguishes our approach from both classical heuristics and quantum approximate solvers, which typically target a single best solution.

Numerical simulations confirm convergence toward the global minimum with high probability (see Fig.~\ref{fig:3_step_inference} and Fig.~\ref{fig:3_qubits_inference}) and validate the predicted linear resource scaling (Figs.~\ref{fig:num_toffolis}--\ref{fig:num_qubits}). 
The algorithm's robustness stems from its fully reversible construction and the thermalizing dynamics of the quantum Metropolis walk, which progressively concentrate probability amplitude on low-cost feasible configurations without the oscillation risks inherent in amplitude amplification schemes.

Critically, this work addresses the theory--practice gap in quantum optimization by providing a framework whose resource requirements are characterized at the logical level and whose implementation is amenable to systematic error correction. 
While the method assumes access to fault-tolerant quantum computers, this transparent resource accounting enables principled technology roadmapping: practitioners can estimate physical qubit overheads based on realistic error rates and plan deployment timelines accordingly.

Future directions include: (i) integration of problem-specific proposal mechanisms to exploit structure in integer linear programming instances, (ii) extension to non-linear and mixed-integer nonlinear programming, and (iii) experimental validation on near-term quantum processors using error mitigation techniques. 
Overall, the proposed framework opens a clear pathway toward practical quantum advantage in combinatorial optimization as quantum hardware matures.

\section{Acknowledgements}
G.E., R.C., and M.A.M.-D. acknowledge the support from grants MINECO/FEDER Projects, PID2021-122547NB-I00 FIS2021, MADQuantumCM project funded by Comunidad de Madrid, the Recovery, Transformation, and Resilience Plan, NextGenerationEU, funded by the European Union, and the Ministry of Economic Affairs Quantum ENIA project funded by Madrid ELLIS Unit CAM. G. E. also acknowledge the support from the CAM Program TEC-2024/COM-84 QUITEMAD-CM.
M.A.M.-D. has also been partially supported by the U.S. Army Research Office through Grant No.W911NF-14-1-0103. This work has been financially supported by the Ministry for Digital Transformation and of Civil Service of the Spanish Government through the QUANTUM ENIA project call – Quantum Spain project, and by the European Union through the Recovery, Transformation and Resilience Plan – NextGenerationEU within the framework of the Digital Spain 2026 Agenda.

\newpage
\appendix

\bibliography{bibliography}

\onecolumngrid
\clearpage
\makeatletter

\setcounter{equation}{0}
\setcounter{figure}{0}
\setcounter{table}{0}
\setcounter{page}{1}
\setcounter{section}{0}
\renewcommand{\theequation}{\thesection.\arabic{equation}}
\renewcommand{\thefigure}{\arabic{figure}}
\renewcommand{\thetable}{\arabic{table}}

\section{Spectral Decomposition of the Walk Operator} \label{ap:spectral_decomposition}

\begin{lemma} \label{w_unitary}
    The construction $W \;=\; R\,P^{\dagger}\,F\, P$ is unitary.
\end{lemma}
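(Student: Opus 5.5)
The plan is to use the fact that unitary operators on a finite-dimensional Hilbert space form a group under composition. It then suffices to show that each of the four factors $P$, $F$, $P^{\dagger}$ and $R$ is unitary on the joint register space $\{S,S',F,F',R,C\}$. Since $U^{\dagger}U = I$ is preserved under products and adjoints, $W = R\,P^{\dagger}\,F\,P$ then satisfies $W^{\dagger}W = P^{\dagger}F^{\dagger}P\,R^{\dagger}R\,P^{\dagger}FP = I$. I will treat each factor on the full space, with every operator extended by the identity on the registers it does not touch. Tensoring with the identity preserves unitarity, so this causes no difficulty.

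\textbf{Step 1: $P = BV$.} The operator $V$ is a composition of Hadamard layers $H^{\otimes d}$, the objective oracle $U_f$, and the constraint blocks $H_j = U_{h_j}^{\dagger}A_{\text{eq}}U_{h_j}$ and $G_i = U_{g_i}^{\dagger}A_{\text{ineq}}U_{g_i}$ of Eq.~(\ref{eq:restrictions}). Each of these is built from $X$ gates, CNOTs, CDKM adders and (multi-)controlled \texttt{Add 1} gates. All of these are classical reversible permutations of the computational basis, and hence permutation matrices, or else they are Hadamards. The one point needing care is that Eq.~(\ref{function_operators}) specifies $U_f$, $U_{g_i}$, $U_{h_j}$ only on inputs $\ket{0}_{F'}$. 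I will take the concrete circuit (load $c_0$, then repeated modular addition) as the definition. That circuit realises $\ket{\mathbf{x}}\ket{y}\mapsto\ket{\mathbf{x}}\ket{y+f(\mathbf{x})\bmod 2^{m}}$, which is a bijection on basis states. The operator $B$ consists of a modular subtraction, which is again a basis permutation, followed by controlled $R_y$ rotations on the coin and then the inverse subtraction. A controlled-$U$ with $U$ unitary is block diagonal with blocks $I$ and $U$, so it is unitary. Hence $P$ is unitary, and therefore so is $P^{\dagger}$, with $(P^{\dagger})^{\dagger}P^{\dagger} = PP^{\dagger} = I$.

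\textbf{Step 2: the shift $F$.} I will write $F = \Pi_{\text{acc}}\otimes \mathrm{SWAP}_{(S,F),(S',F')} + (I-\Pi_{\text{acc}})\otimes I$, where $\Pi_{\text{acc}} = \ket{1}\!\bra{1}_C\otimes\ket{m'}\!\bra{m'}_R$ is an orthogonal projector on the control registers $C,R$, which $F$ does not modify. The SWAP is a unitary involution, and the two terms act on orthogonal control subspaces. It follows that $F^{\dagger}F = \Pi_{\text{acc}}\otimes I + (I-\Pi_{\text{acc}})\otimes I = I$. The swapped registers must have equal width, which holds by Table~\ref{tab:registers} since $S,S'$ and $F,F'$ are allocated identically.

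\textbf{Step 3: the reflection $R$.} Let $\Pi_0 = \ket{0}\!\bra{0}_{S'}\otimes\ket{0}\!\bra{0}_C$. Then $R = 2\Pi_0 - I$ is Hermitian, and $R^2 = 4\Pi_0 - 4\Pi_0 + I = I$, so $R$ is unitary.

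The main obstacle is not any of these algebraic checks but the well-definedness issue flagged in Step 1. The text defines $U_f$, $V$ and $B$ only on particular input states, for example Eq.~(\ref{function_operators}) on $\ket{0}_{F'}$ and $V$ on $\ket{0}_{S'}\ket{0}_{F'}\ket{0}_R$. A partial isometry specified on a subspace is not automatically a unitary. I would resolve this by always identifying each operator with its gate-level circuit, and noting that the circuit is unitary on the whole space and agrees with the stated action on the specified inputs. A related subtlety is overflow in the counter $R$ and the accumulator $F'$. This is harmless because all increments are modulo $2^{\text{width}}$, so they remain permutations.
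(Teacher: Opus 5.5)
Your proposal is correct and follows the same route as the paper: show that each factor $P=BV$, $F$, $P^{\dagger}$, $R$ is unitary, then conclude via $W^{\dagger}W = P^{\dagger}F^{\dagger}P\,R^{\dagger}R\,P^{\dagger}FP = I$, which is exactly the expansion the paper uses. Your treatment is more careful than the paper's, which checks only states with the ancillas in $\ket{0}$ and leaves the action of $U_f$, $V$, $B$ on other inputs implicit, whereas you work on the full register space and identify each block with its gate-level circuit (one harmless slip: loading $c_0$ with $X$ gates sends $y\mapsto y\oplus c_0$ rather than $y+c_0$, but this is still a bijection on basis states).
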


\begin{proof}
    Although the unitarity of the operator $W$ is evident because it has been constructed from unitary operations, it is straightforward to show that $W^{\dagger} W = I$ for a general state $\sum_{x \in \Omega} a_x \ket{x}_S \ket{f(x)}_F \ket{0}_{S'}\ket{0}_{F'} \ket{0}_R  \ket{c}_C$. Since registers $\ket{0}_{S'}\ket{0}_{F'} \ket{0}_R $ are ancillas, at the end of an entire step of the quantum walk, they must always return to zero, making this the most general state. 

    We can expand the operator:
    \begin{equation}
        W^{\dagger} W =(R V^{\dagger} B^{\dagger} F B V)^{\dagger} R V^{\dagger} B^{\dagger} F B V = V^{\dagger} B^{\dagger} F^{\dagger} B V R^{\dagger} R V^{\dagger} B^{\dagger} F B V
    \end{equation}

    Since $R$ is nothing more than a reflection, it is easy to verify that:
    \begin{equation}
        R^\dagger R = (2\ket{0}\bra{0}_{S'}  \otimes \ket{0}\bra{0}_C - I_{S'} \otimes I_C)^\dagger (2\ket{0}\bra{0}_{S'}  \otimes \ket{0}\bra{0}_C - I_{S'} \otimes I_C) = I_{S'} \otimes I_C,
    \end{equation}
    acting as the identity in the rest of the registers. Based on how we have constructed the operator $V$, we can see that:
        \begin{align}
    \small
        & V V^\dagger = U_f\prod_{i=0}^{p - 1} G_{p - 1 -i} \prod_{j=0}^{m - 1} H_{m - 1 -j} U_H^{\otimes n}(U_f\prod_{i=0}^{p - 1} G_{p - 1 -i} \prod_{j=0}^{m - 1} H_{m - 1 -j} U_H^{\otimes n})^\dagger \nonumber \\
        & = U_f\prod_{i=0}^{p - 1} G_{p - 1 -i} \prod_{j=0}^{m - 1} H_{m - 1 -j} U_H^{\otimes n}(U_H^{\otimes n})^{\dagger}  (\prod_{j=0}^{m - 1} H_{m - 1 -j})^{\dagger}(\prod_{i=0}^{p - 1} G_{p - 1 -i})^{\dagger} U_f^{\dagger} \nonumber \\
        & = U_f\prod_{i=0}^{p - 1} G_{p - 1 -i} \prod_{j=0}^{m - 1} H_{m - 1 -j}  (\prod_{j=0}^{m - 1} H_{m - 1 -j})^{\dagger}(\prod_{i=0}^{p - 1} G_{p - 1 -i})^{\dagger} U_f^{\dagger}.
    \end{align}
    Each of the operators $H_j, G_i$ has been constructed in a unitary way, since:
    \begin{equation}
        H_j^\dagger H_j = (U_{h_j}^{\dagger} A_{\text{eq}} U_{h_j})^\dagger U_{h_j}^{\dagger} A_{\text{eq}} U_{h_j} = U_{h_j}^{\dagger} A_{\text{eq}}^{\dagger} U_{h_j} U_{h_j}^{\dagger} A_{\text{eq}} U_{h_j},
    \end{equation}
    and similarly
    \begin{equation}
        G_i^\dagger G_i = (U_{g_i}^{\dagger} A_{\text{ineq}} U_{g_i})^\dagger U_{g_i}^{\dagger} A_{\text{ineq}} U_{g_i} = U_{g_i}^{\dagger} A_{\text{ineq}}^{\dagger} U_{g_i} U_{g_i}^{\dagger} A_{\text{ineq}} U_{g_i}.
    \end{equation}
    The operators $U_{h_j}$, $U_{g_i}$, and $U_f$ implement only reversible arithmetic transformations and thus they are unitary by construction. 
    Similarly, the operators $A_{\mathrm{eq}}$ and $A_{\mathrm{ineq}}$ consist solely of conditional addition circuits, which are also reversible. This leads to $V V^\dagger = I$.
    
    Repeating the process for operator $B$:
    \begin{align}
    \small
        & B^\dagger B = (U_{\Delta f}^\dagger R(\vartheta) U_{\Delta f})^\dagger U_{\Delta f}^\dagger   R(\vartheta)  U_{\Delta f} \nonumber \\
        & =  U_{\Delta f}^\dagger   R(\vartheta)^\dagger  U_{\Delta f} U_{\Delta f}^\dagger   R(\vartheta)  U_{\Delta f}.
    \end{align}
    The operator $U_{\Delta f}$ is unitary by construction, since it is quantum arithmetic, and the operator $R(\vartheta)$ is constructed from conditioned rotations, so it is also unitary, therefore $B^\dagger B = I$.

    Finally, the operator $F$ is a controlled swap, so trivially $F^\dagger F = I$. Thus, since all operators that constitute the operator $W$ are unitary, it follows that operator $W$ is unitary.
\end{proof}

The properties described here have been demonstrated for quantum Metropolis operators \cite{PhysRevResearch.5.033059}, but it is necessary to demonstrate them for this explicit construction.

\begin{theorem}\label{eigen_W}
The state $\ket{\Pi} = \frac{1}{\sqrt{Z(\beta)}}\sum_{x \in \Omega} \sqrt{e^{- \beta f(x)}} \ket{x}_S \ket{f(x)}_F \ket{0}_{S'}\ket{0}_{F'} \ket{0}_R  \ket{0}_C$, with $Z(\beta) = \sum_{x \in \Omega} e^{-\beta f(x)}$ is eigenstate of $W \;=\; R\,P^{\dagger}\,F\, P $ with eigenvalue 1. 
\end{theorem}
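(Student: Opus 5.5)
The plan is to push $\ket{\Pi}$ through the four factors $P$, $F$, $P^{\dagger}$, $R$ in turn and check that each intermediate state is forced back into the span of $\ket{\Pi}$. It is convenient to write $P = BV$ and to define, for each $x\in\Omega$,
\begin{equation}
\ket{\phi_x} := P\ket{x}_S\ket{f(x)}_F\ket{0}_{S'}\ket{0}_{F'}\ket{0}_R\ket{0}_C
= \frac{1}{\sqrt{|\Theta|}}\sum_{x'\in\Theta}\ket{x}_S\ket{f(x)}_F\ket{x'}_{S'}\ket{f(x')}_{F'}\ket{r(x')}_R\bigl[\sqrt{1-A(x,x')}\ket{0}_C+\sqrt{A(x,x')}\ket{1}_C\bigr],
\end{equation}
which follows from the action of $V$ described in the proposal subsection together with the action of $B$ in Eq.~(\ref{eq:rotation}). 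Because the reflection is $R = 2\ket{0}\!\bra{0}_{S'}\otimes\ket{0}\!\bra{0}_C - I$, and $\ket{\Pi}$ lies in the $+1$ eigenspace of $R$, it suffices to show
\begin{equation}
P^{\dagger}FP\ket{\Pi}=\ket{\Pi}, \qquad\text{equivalently}\qquad FP\ket{\Pi}=P\ket{\Pi}.
\end{equation}

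\textbf{Step 1 (decompose $P\ket{\Pi}$).} Write $P\ket{\Pi}=\frac{1}{\sqrt{Z}}\sum_{x\in\Omega}e^{-\beta f(x)/2}\ket{\phi_x}$ and split it into three parts:
\begin{itemize}
\item the rejected branch, with coin $\ket{0}_C$, which $F$ leaves untouched;
\item the branch with coin $\ket{1}_C$ but an infeasible candidate $x'\notin\Omega$, so $r\neq m'$; $F$ also leaves this untouched;
\item the accepted feasible branch, with $\ket{1}_C$ and $\ket{m'}_R$, in which $F$ swaps $(x,f(x))\leftrightarrow(x',f(x'))$.
\end{itemize}

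\textbf{Step 2 (invariance of the accepted branch).} The third branch is $\frac{1}{\sqrt{Z|\Theta|}}\sum_{x,x'\in\Omega} e^{-\beta f(x)/2}\sqrt{A(x,x')}\ket{x,f(x)}\ket{x',f(x')}\ket{m'}\ket{1}$. Detailed balance gives $e^{-\beta f(x)}A(x,x')=e^{-\beta f(x')}A(x',x)$, so this coefficient is symmetric under $x\leftrightarrow x'$. The branch is therefore invariant under the swap, and $F$ fixes it. Note that the $R$ register is $\ket{m'}$ both before and after the swap, because $x$ is also feasible; this is exactly where the restriction to $x\in\Omega$ in $\ket{\Pi}$ is used. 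Combining Steps 1 and 2 gives $FP\ket{\Pi}=P\ket{\Pi}$, and applying $P^{\dagger}$ returns $\ket{\Pi}$ with all ancillas reset.

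\textbf{Main obstacle.} The delicate points lie in the bookkeeping rather than the symmetry argument.
\begin{enumerate}
\item The constraint counter $R$ records the feasibility of $x'$ only. The swap is harmless only because $x\in\Omega$ already holds, so I must check explicitly that swapping does not require recomputing $r$.
\item The coin amplitude actually implemented is the linearized one from Theorem~\ref{th:aprox_funct}. The symmetry in Step 2 needs detailed balance for that amplitude, not for the exact exponential. I would first verify the statement for the exact Metropolis rule $A=\min\{1,e^{-\beta\Delta f}\}$, where detailed balance is standard. For the linearized rule I would either state that $\ket{\Pi}$ should be replaced by the stationary state of the corresponding reversible chain, or restrict the claim to the exact rotation.
\item Since $P$ is unitary by Lemma~\ref{w_unitary}, the step $P^{\dagger}P=I$ is immediate.
\end{enumerate}
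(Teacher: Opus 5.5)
Your argument is correct and is essentially the paper's: the same split of $P\ket{\Pi}$ into the rejected branch, the infeasible-candidate branch and the accepted feasible branch, with detailed balance $e^{-\beta f(x)}A(x,x')=e^{-\beta f(x')}A(x',x)$ making the accepted branch swap-symmetric on $\Omega\times\Omega$. The only difference is that you conclude directly from $FP\ket{\Pi}=P\ket{\Pi}$ and the unitarity of $P$, whereas the paper applies $B^{\dagger}$ branch by branch and checks that the coin amplitudes recombine to $\ket{0}_C$ before applying $V^{\dagger}$ and $R$.
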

\begin{proof}
    To prove this theorem, we will perform a straightforward calculation. Let $m'$ be the total number of constraints. Let's break down $P = BV$ and apply all the operators in order to $ \ket{\Pi}$. Introducing $\mathcal{N}=Z(\beta)|\Theta|$:
\begin{align}
\small
& R V^{\dagger} B^{\dagger} F B V 
  \frac{1}{\sqrt{Z(\beta)}}
  \sum_{x \in \Omega} 
  \sqrt{e^{-\beta f(x)}} 
  \ket{x}_S \ket{f(x)}_F 
  \ket{0}_{S'} \ket{0}_{F'} \ket{0}_R \ket{0}_C 
  \nonumber \\[4pt]
&= R V^{\dagger} B^{\dagger} F B 
  \frac{1}{\sqrt{\mathcal{N}}} 
  \sum_{x \in \Omega} 
  \sqrt{e^{-\beta f(x)}} 
  \ket{x}_S \ket{f(x)}_F 
  \sum_{x' \in \Theta} 
  \ket{x'}_{S'} \ket{f(x')}_{F'} 
  \ket{r_{x'}}_R \ket{0}_C 
  \nonumber \\[4pt]
&= R V^{\dagger} B^{\dagger} F 
  \frac{1}{\sqrt{\mathcal{N}}} 
  \sum_{x \in \Omega} \sum_{x' \in \Theta} 
  \sqrt{e^{-\beta f(x)}} 
  \ket{x}_S \ket{f(x)}_F 
  \ket{x'}_{S'} \ket{f(x')}_{F'} 
  \ket{r_{x'}}_R 
  \big[ 
    \sqrt{1-A(x,x')} \ket{0}_C 
    + \sqrt{A(x,x')} \ket{1}_C 
  \big]
  \nonumber \\[4pt]
&= R V^{\dagger} B^{\dagger} 
  \frac{1}{\sqrt{\mathcal{N}}} 
  \bigg[ 
    \sum_{x \in \Omega} \sum_{x' \in \Omega} 
    \sqrt{e^{-\beta f(x)}} 
    \ket{x}_S \ket{f(x)}_F 
    \ket{x'}_{S'} \ket{f(x')}_{F'} 
    \ket{m'}_R 
    \sqrt{1-A(x,x')} \ket{0}_C 
  \nonumber \\[-2pt]
&\qquad\qquad 
  + \sum_{x \in \Omega} \sum_{x' \in \Omega} 
    \sqrt{e^{-\beta f(x)}} 
    \ket{x'}_S \ket{f(x')}_F 
    \ket{x}_{S'} \ket{f(x)}_{F'} 
    \ket{m'}_R 
    \sqrt{A(x,x')} \ket{1}_C 
  \bigg]
  \nonumber \\[4pt]
&\quad + R V^{\dagger} B^{\dagger} 
  \frac{1}{\sqrt{\mathcal{N}}} 
  \sum_{x \in \Omega} \sum_{x' \in \Theta - \Omega} 
  \sqrt{e^{-\beta f(x)}} 
  \ket{x}_S \ket{f(x)}_F 
  \ket{x'}_{S'} \ket{f(x')}_{F'} 
  \ket{r_{x'}}_R 
  \big[ 
    \sqrt{1-A(x,x')} \ket{0}_C 
    + \sqrt{A(x,x')} \ket{1}_C 
  \big].
\end{align}

    Let's examine the three terms of the sum separately. Applying $B^{\dagger}$ to the first one:
    \begin{align}
    \small
    & R V^{\dagger} B^{\dagger} \frac{1}{\sqrt{\mathcal{N}}} 
      \Big[ \sum_{x \in \Omega} \sum_{x' \in \Omega} 
      \sqrt{e^{- \beta f(x)}} \ket{x}_S \ket{f(x)}_F 
      \ket{x'}_{S'} \ket{f(x')}_{F'} \ket{m'}_R 
      \sqrt{1-A(x,x')} \ket{0}_C \Big] \nonumber \\
    & = R V^{\dagger}  \frac{1}{\sqrt{\mathcal{N}}} 
      \Big[ \sum_{x \in \Omega} \sum_{x' \in \Omega} 
      \sqrt{e^{- \beta f(x)}} \ket{x}_S \ket{f(x)}_F 
      \ket{x'}_{S'} \ket{f(x')}_{F'} \ket{m'}_R 
      \nonumber\\
    & \qquad\qquad \qquad 
      \sqrt{1-A(x,x')} 
      \big[ \sqrt{1-A(x,x')} \ket{0}_C 
      - \sqrt{A(x,x')} \ket{1}_C \big] \Big] .
\end{align}
    Applying $B^{\dagger}$ to the second:
\begin{align}
\small
& R V^{\dagger} B^{\dagger} 
   \frac{1}{\sqrt{\mathcal{N}}}\bigg[ 
    \sum_{x \in \Omega} \sum_{x' \in \Omega} 
    \sqrt{e^{-\beta f(x)}} 
    \ket{x'}_S \ket{f(x')}_F 
    \ket{x}_{S'} \ket{f(x)}_{F'} 
    \ket{m'}_R 
    \sqrt{A(x,x')} \ket{1}_C 
  \bigg]
  \nonumber \\[4pt]
&= R V^{\dagger}  
  \frac{1}{\sqrt{\mathcal{N}}} \bigg[ 
    \sum_{x \in \Omega} \sum_{x' \in \Omega} 
    \sqrt{e^{-\beta f(x)}} 
    \ket{x'}_S \ket{f(x')}_F 
    \ket{x}_{S'} \ket{f(x)}_{F'} 
    \ket{m'}_R 
    \sqrt{A(x,x')} 
    \big[ 
      \sqrt{A(x',x)} \ket{0}_C   \nonumber \\[4pt]
& \quad\quad\quad\quad\quad\quad\quad\quad\quad\quad\quad\quad\quad\quad\quad\quad \quad\quad\quad\quad\quad\quad      + \sqrt{1-A(x',x)} \ket{1}_C 
    \big] 
  \bigg].
\end{align}

    Using the detailed balance condition:
    \begin{equation}
        \sqrt{e^{- \beta f(x)}}A(x,x')=\sqrt{e^{- \beta f(x')}}A(x',x),
    \end{equation}
    And swapping the dummy indices $ x\leftrightarrow x'$:
\begin{align}
\small
& R V^{\dagger}  
  \frac{1}{\sqrt{\mathcal{N}}}\bigg[ 
    \sum_{x \in \Omega} \sum_{x' \in \Omega} 
    \sqrt{e^{-\beta f(x)}} 
    \ket{x'}_S \ket{f(x')}_F 
    \ket{x}_{S'} \ket{f(x)}_{F'} 
    \ket{m'}_R 
    \sqrt{A(x,x')} 
    \big[ 
      \sqrt{A(x',x)} \ket{0}_C 
      + \sqrt{1-A(x',x)} \ket{1}_C 
    \big] 
  \bigg]
  \nonumber \\[4pt]
&= R V^{\dagger}  
  \frac{1}{\sqrt{\mathcal{N}}}\bigg[ 
    \sum_{x \in \Omega} \sum_{x' \in \Omega} 
    \sqrt{e^{-\beta f(x')}} 
    \ket{x'}_S \ket{f(x')}_F 
    \ket{x}_{S'} \ket{f(x)}_{F'} 
    \ket{m'}_R 
    \sqrt{A(x',x)} 
    \big[ 
      \sqrt{A(x',x)} \ket{0}_C 
      + \sqrt{1-A(x',x)} \ket{1}_C 
    \big] 
  \bigg]
  \nonumber \\[4pt]
&= R V^{\dagger}  
  \frac{1}{\sqrt{\mathcal{N}}}\bigg[ 
    \sum_{x' \in \Omega} \sum_{x \in \Omega} 
    \sqrt{e^{-\beta f(x)}} 
    \ket{x}_S \ket{f(x)}_F 
    \ket{x'}_{S'} \ket{f(x')}_{F'} 
    \ket{m'}_R 
    \sqrt{A(x,x')} 
    \big[ 
      \sqrt{A(x,x')} \ket{0}_C 
      + \sqrt{1-A(x,x')} \ket{1}_C 
    \big] 
  \bigg].
\end{align}

    Applying $B^{\dagger}$ to the third:
\begin{align}
\small
& R V^{\dagger} B^{\dagger} 
  \frac{1}{\sqrt{\mathcal{N}}}\bigg[
    \sum_{x \in \Omega} \sum_{x' \in \Theta - \Omega} 
    \sqrt{e^{-\beta f(x)}} 
    \ket{x}_S \ket{f(x)}_F 
    \ket{x'}_{S'} \ket{f(x')}_{F'} 
    \ket{r_{x'}}_R 
    \big[
      \sqrt{1-A(x,x')} \ket{0}_C 
      + \sqrt{A(x,x')} \ket{1}_C
    \big]
  \bigg]
  \nonumber \\[4pt]
&= R V^{\dagger}  
  \frac{1}{\sqrt{\mathcal{N}}}\bigg[
    \sum_{x \in \Omega} \sum_{x' \in \Theta - \Omega} 
    \sqrt{e^{-\beta f(x)}} 
    \ket{x}_S \ket{f(x)}_F 
    \ket{x'}_{S'} \ket{f(x')}_{F'} 
    \ket{r_{x'}}_R 
    \sqrt{1-A(x,x')} 
    \big[
      \sqrt{1-A(x,x')} \ket{0}_C 
      - \sqrt{A(x,x')} \ket{1}_C
    \big]
  \nonumber \\[-2pt]
&\qquad\qquad
  + \sum_{x \in \Omega} \sum_{x' \in \Theta - \Omega} 
    \sqrt{e^{-\beta f(x)}} 
    \ket{x}_S \ket{f(x)}_F 
    \ket{x'}_{S'} \ket{f(x')}_{F'} 
    \ket{r_{x'}}_R 
    \sqrt{A(x,x')} 
    \big[
      \sqrt{A(x,x')} \ket{0}_C 
      - \sqrt{1-A(x,x')} \ket{1}_C
    \big]
  \bigg]
  \nonumber \\[4pt]
&= R V^{\dagger}  
  \frac{1}{\sqrt{\mathcal{N}}}\bigg[
    \sum_{x \in \Omega} \sum_{x' \in \Theta - \Omega} 
    \sqrt{e^{-\beta f(x)}} 
    \ket{x}_S \ket{f(x)}_F 
    \ket{x'}_{S'} \ket{f(x')}_{F'} 
    \ket{r_{x'}}_R 
    \ket{0}_C
  \bigg].
\end{align}

    Combining all the terms:
\begin{align}
\small
& R V^{\dagger} B^{\dagger} F B V \ket{\Pi}  
  \nonumber \\[4pt]
&= R V^{\dagger}  
  \frac{1}{\sqrt{\mathcal{N}}}\bigg[
    \sum_{x \in \Omega} \sum_{x' \in \Omega} 
    \sqrt{e^{-\beta f(x)}} 
    \ket{x}_S \ket{f(x)}_F 
    \ket{x'}_{S'} \ket{f(x')}_{F'} 
    \ket{m'}_R 
    \sqrt{1-A(x,x')} 
    \big[
      \sqrt{1-A(x,x')} \ket{0}_C 
      - \sqrt{A(x,x')} \ket{1}_C
    \big]
  \nonumber \\[-2pt]
&\qquad
  + \sum_{x' \in \Omega} \sum_{x \in \Omega} 
    \sqrt{e^{-\beta f(x)}} 
    \ket{x}_S \ket{f(x)}_F 
    \ket{x'}_{S'} \ket{f(x')}_{F'} 
    \ket{m'}_R 
    \sqrt{A(x,x')} 
    \big[
      \sqrt{A(x,x')} \ket{0}_C 
      + \sqrt{1-A(x,x')} \ket{1}_C
    \big]
  \nonumber \\[-2pt]
&\qquad
  + \sum_{x \in \Omega} \sum_{x' \in \Theta - \Omega} 
    \sqrt{e^{-\beta f(x)}} 
    \ket{x}_S \ket{f(x)}_F 
    \ket{x'}_{S'} \ket{f(x')}_{F'} 
    \ket{r_{x'}}_R 
    \ket{0}_C
  \bigg]
  \nonumber \\[4pt]
&= R V^{\dagger}  
  \frac{1}{\sqrt{\mathcal{N}}}\bigg[
    \sum_{x \in \Omega} \sum_{x' \in \Omega} 
    \sqrt{e^{-\beta f(x)}} 
    \ket{x}_S \ket{f(x)}_F 
    \ket{x'}_{S'} \ket{f(x')}_{F'} 
    \ket{m'}_R 
    \ket{0}_C   \nonumber \\[4pt]
&    + 
    \sum_{x \in \Omega} \sum_{x' \in \Theta - \Omega} 
    \sqrt{e^{-\beta f(x)}} 
    \ket{x}_S \ket{f(x)}_F 
    \ket{x'}_{S'} \ket{f(x')}_{F'} 
    \ket{r_{x'}}_R 
    \ket{0}_C
  \bigg]
  \nonumber \\[4pt]
&= R V^{\dagger}  
  \frac{1}{\sqrt{\mathcal{N}}}\bigg[
    \sum_{x \in \Omega} \sum_{x' \in \Theta} 
    \sqrt{e^{-\beta f(x)}} 
    \ket{x}_S \ket{f(x)}_F 
    \ket{x'}_{S'} \ket{f(x')}_{F'} 
    \ket{r_{x'}}_R 
    \ket{0}_C
  \bigg].
\end{align}

    And $V^{\dagger}$ returns it to state $\ket{\Pi}$. Applying $R$ does not change anything:
    \begin{align}
    \small
        & R V^{\dagger} \frac{1}{\sqrt{\mathcal{N}}} \sum_{x \in \Omega} \sum_{x' \in \Theta} \sqrt{e^{- \beta f(x)}} \ket{x}_S \ket{f(x)}_F \ket{x'}_{S'}\ket{f(x')}_{F'} \ket{r_{x'}}_R \ket{0}_C \nonumber \\ 
        & = R \frac{1}{\sqrt{Z(\beta)}} \sum_{x \in \Omega} \sqrt{e^{- \beta f(x)}} \ket{x}_S \ket{f(x)}_F \ket{0}_{S'}\ket{0}_{F'} \ket{0}_R \ket{0}_C \nonumber \\ 
        & = \ket{\Pi} 
    \end{align}
\end{proof}

It is important to note at this point that even though the exponential function has been approximated, the algorithm remains robust. To this end, let us formulate the following corollary:
\begin{corollary} \label{th:aprox_funct}
    Let the state be $\ket{\Pi_{p}} = \frac{1}{\sqrt{\mathcal{N}}} \sum_{x \in \Omega} \sqrt{p(x)} \ket{x}_S \ket{f(x)}_F \ket{0}_{S'}\ket{0}_{F'} \ket{0}_R  \ket{0}_C$, with $\mathcal{N} = \sum_{x \in \Omega} p(x)$, and let the apmlitude be $A(x,x')=\min \{ 1, \frac{p(x')}{p(x)}\}$ such that $\sqrt{p(x)}A(x,x')=\sqrt{p(x')}A(x',x)$. Then, $\ket{\Pi_{p}}$ is eigenstate of $W \;=\; R\,P^{\dagger}\,F\, P $ with eigenvalue 1. 
\end{corollary}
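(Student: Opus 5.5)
The plan is to repeat the computation in the proof of Theorem~\ref{eigen_W} verbatim, replacing the Boltzmann weight $e^{-\beta f(x)}$ by the general positive weight $p(x)$ and the normalization $Z(\beta)$ by $\mathcal{N}$. That proof used only three structural facts:
\begin{itemize}
\item $V$ acts on the ancilla block $\ket{0}_{S'}\ket{0}_{F'}\ket{0}_R$ independently of the weights;
\item $B$ writes $\sqrt{1-A}\ket{0}_C+\sqrt{A}\ket{1}_C$ with $A\in[0,1]$;
\item the weights satisfy the detailed-balance identity.
\end{itemize}
No property specific to the exponential was used. Concretely, I will first note that $A(x,x')=\min\{1,p(x')/p(x)\}\in[0,1]$, so the rotation $R(\vartheta)$ of Eq.~(\ref{eq:rotation}) is well defined and unitary, and Lemma~\ref{w_unitary} still applies. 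I will also check that the assumed relation $\sqrt{p(x)}A(x,x')=\sqrt{p(x')}A(x',x)$ is exactly the form of detailed balance invoked in the earlier proof.

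The steps, in order, are as follows. First apply $V$ to $\ket{\Pi_p}$, obtaining the sum over $x\in\Omega$ and $x'\in\Theta$ with the counter $\ket{r_{x'}}_R$. Then apply $B$, and split the result into feasible ($x'\in\Omega$, counter $\ket{m'}_R$) and infeasible ($x'\in\Theta-\Omega$) branches. Next apply $F$, which swaps $(S,F)\leftrightarrow(S',F')$ only on the $\ket{1}_C\ket{m'}_R$ component.

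For the infeasible branch, $F$ acts trivially, so $B^\dagger$ simply undoes $B$ and returns $\ket{0}_C$. For the feasible swapped branch, apply $B^\dagger$ to the swapped pair $(x',x)$, which produces coefficients $\sqrt{A(x',x)}$ and $\sqrt{1-A(x',x)}$. Then use the detailed-balance hypothesis to replace $\sqrt{p(x)}\sqrt{A(x,x')}$ by $\sqrt{p(x')}\sqrt{A(x',x)}$ and relabel $x\leftrightarrow x'$. After this the swapped branch has the same form as the unswapped one. Summing the two, the $\ket{1}_C$ contributions cancel and the $\ket{0}_C$ contributions combine to $(1-A)+A=1$. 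Finally $V^\dagger$ returns the ancillas to zero, giving $\ket{\Pi_p}$, and $R$ acts as the identity on the $\ket{0}_{S'}\ket{0}_C$ subspace.

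The main obstacle is the detailed-balance step. As written, the hypothesis involves $A$ linearly, whereas the amplitudes carry $\sqrt{A}$, so I must be careful about which form is actually needed. The relabelling argument requires $\sqrt{p(x)A(x,x')}=\sqrt{p(x')A(x',x)}$, i.e.\ the standard $p(x)A(x,x')=p(x')A(x',x)$. For the Metropolis choice $A=\min\{1,p(x')/p(x)\}$, both sides equal $\min\{p(x),p(x')\}$, so this holds automatically. I will therefore verify that identity directly from the definition of $A$, rather than relying on the stated square-root version, and use it in the cancellation.

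The remaining bookkeeping, namely the cross terms $\sqrt{A}\sqrt{1-A}$ cancelling between the rejected and swapped branches, is identical to the computation in Theorem~\ref{eigen_W}. The corollary then follows. In particular, any monotone surrogate $p$ produced by the linearized rotations of Fig.~\ref{fig:B_operator} yields a stationary eigenvector supported on $\Omega$.
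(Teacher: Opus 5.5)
Your proposal is correct and follows the paper's route: rerun the computation of Theorem~\ref{eigen_W} with $p(x)$ in place of $e^{-\beta f(x)}$, with detailed balance as the only essential input. You also rightly note that the cancellation needs $p(x)A(x,x')=p(x')A(x',x)$ (both sides equal $\min\{p(x),p(x')\}$), not the square-root identity written in the statement, which for the Metropolis $A$ holds only when $p(x)=p(x')$; the paper's own computation in Theorem~\ref{eigen_W} silently uses your version despite displaying the square-root one.

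Your closing remark does not follow from the corollary, however. The linearized rotations of Fig.~\ref{fig:B_operator} give an acceptance rule $a(\Delta f)$ that depends only on $f(x')-f(x)$. Such a rule has the form $\min\{1,p(x')/p(x)\}$ for some $p$ only if $a$ is multiplicative on the realized differences, i.e.\ essentially exponential, so monotonicity alone does not guarantee a stationary eigenvector.
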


\begin{proof}
 The proof is identical to Theorem \ref{eigen_W}, and the key is that the balance equation is satisfied.
\end{proof}

The Corollary $\ref{th:aprox_funct}$ shows that the algorithm works regardless of the function we use, provided that we maintain the Metropolis-Hastings probability transition rule $A(x,x')=\min \{ 1, \frac{p(x')}{p(x)}\}$ so that the balance equation $\sqrt{p(x)}A(x,x')=\sqrt{p(x')}A(x',x)$ is satisfied. 

Thus, if we choose $p(x)\approx e^{\beta f(x)}$, it will be this approximation what the algorithm will sample, and if $p(x)$ is capable of separating the minima/maxima as the Gibbs distribution does, the algorithm will work correctly. Since we force the approximation to be a monotonic function, it will always be well--behaved.

Let us show that the quantum walk constructed can be expressed as a product of two reflections, and thus apply the following theorem, which is an adaptation of Theorem 1 in Ref. \cite{1366222}, as can be seen in Ref. \cite{PhysRevResearch.5.033059}:

\begin{theorem}
    Let $\mathcal{H}$ be a $N$-dimensional Hilbert space.
    Let $\mathcal{A}$ (resp. $\mathcal{B})$ be a $n$-dimensional subspace of $\mathcal{H}$ spanned by orthonormal vectors $u_1,...,u_m$ (resp. $v_1,...,v_n$).
    Denote by $V_{\mathcal{A}}$ (resp. $V_{\mathcal{B}}$) the $N \times m$ (resp. $N \times n$) matrix whose $i$th column is $u_i$ (resp. $v_i$).
    Define $R_{\mathcal{A}}=2V_{\mathcal{A}}V_{\mathcal{A}}^\dagger-I$ and $R_{\mathcal{B}}=2V_{\mathcal{B}}V_{\mathcal{B}}^\dagger-I$.
    Then, on $\mathcal{A}+\mathcal{B}$, the unitary operator $R_{\mathcal{A}}R_{\mathcal{B}}$ has an eigenvalue 1 with multiplicity 1, and any other eigenvalue is either of $e^{2i\theta_1},e^{-2i \theta_1},...,e^{2i\theta_l},e^{-2i\theta_l}$ or -1, where $\theta_1,...,\theta_l\in\left(0,\frac{\pi}{2}\right)$ are written as $\theta_i=\arccos \lambda_i$ with singular values $\{\lambda_i\}$ of $V_{\mathcal{A}}^\dagger V_{\mathcal{B}}$ that lie in $(0,1)$.
    \label{th:Szegedy}
\end{theorem}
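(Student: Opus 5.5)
The plan is to reduce the statement to the two-projector geometry of Jordan's lemma. That is, we decompose $\mathcal{H}$ into one- and two-dimensional subspaces that are invariant under both $P_{\mathcal{A}}=V_{\mathcal{A}}V_{\mathcal{A}}^\dagger$ and $P_{\mathcal{B}}=V_{\mathcal{B}}V_{\mathcal{B}}^\dagger$, and read off the eigenvalues of $R_{\mathcal{A}}R_{\mathcal{B}}$ block by block.

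Step 1 is a singular value decomposition. Write $V_{\mathcal{A}}^\dagger V_{\mathcal{B}}=\sum_k \lambda_k\, a_k b_k^\dagger$, with orthonormal $\{a_k\}\subset\mathbb{C}^m$, orthonormal $\{b_k\}\subset\mathbb{C}^n$, and $\lambda_k\in[0,1]$. Set $\ket{\alpha_k}=V_{\mathcal{A}}a_k\in\mathcal{A}$ and $\ket{\beta_k}=V_{\mathcal{B}}b_k\in\mathcal{B}$. These are orthonormal bases of the relevant parts of $\mathcal{A}$ and $\mathcal{B}$, and they satisfy $\braket{\alpha_k|\beta_l}=\lambda_k\delta_{kl}$. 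Their span is the whole of $\mathcal{A}+\mathcal{B}$.

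Step 2 sorts these vectors into three cases.
\begin{itemize}
\item If $\lambda_k=1$, then $\ket{\alpha_k}=\ket{\beta_k}\in\mathcal{A}\cap\mathcal{B}$. Both reflections fix it, so it has eigenvalue $1$.
\item If $\lambda_k=0$ (or the vector is unpaired), it lies in $\mathcal{A}\cap\mathcal{B}^\perp$ or in $\mathcal{B}\cap\mathcal{A}^\perp$. Exactly one reflection flips its sign, so it has eigenvalue $-1$.
\item If $\lambda_k=\cos\theta_k\in(0,1)$, the plane $\mathrm{span}\{\ket{\alpha_k},\ket{\beta_k}\}$ is invariant under both reflections. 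On that plane $R_{\mathcal{A}}R_{\mathcal{B}}$ is a product of two line reflections whose lines meet at angle $\theta_k$. It is therefore a rotation by $2\theta_k$, with eigenvalues $e^{\pm 2i\theta_k}$, neither of which equals $1$.
\end{itemize}
Mutual orthogonality of the different blocks follows from the biorthogonality relations of Step 1, so these blocks exhaust $\mathcal{A}+\mathcal{B}$.

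Step 3 is the main obstacle. The eigenvalue $1$ on $\mathcal{A}+\mathcal{B}$ equals $\dim(\mathcal{A}\cap\mathcal{B})$, so ``multiplicity 1'' does not follow from the geometry alone. It must use the structure of the application, as in Szegedy's setting (Ref.~\cite{1366222}). There $\mathcal{A}$ and $\mathcal{B}$ are the spans of the ``forward'' and ``backward'' states built from an ergodic reversible chain. The matrix $V_{\mathcal{A}}^\dagger V_{\mathcal{B}}$ is then the discriminant $D_{xy}=\sqrt{M_{xy}M_{yx}}$, which is similar to the transition matrix $M_\beta$. By Perron--Frobenius for an irreducible chain, singular value $1$ occurs exactly once, and its singular vector corresponds to $\sqrt{\pi_\beta}$. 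I would state this as the standing hypothesis under which the adaptation in Ref.~\cite{PhysRevResearch.5.033059} applies. In our construction it is supplied by the Metropolis walk restricted to $\Omega$: the proposal is fully connected, so the walk is irreducible on $\Omega$, and detailed balance holds (Theorem~\ref{eigen_W}). The common invariant vector is then the state $\ket{\Pi}$, and all other singular values lie in $[0,1)$. This yields the claimed spectrum.
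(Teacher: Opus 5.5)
The paper gives no proof of this statement. It imports Szegedy's spectral lemma \cite{1366222} in the form restated in \cite{PhysRevResearch.5.033059}, so there is no in-paper argument to compare yours against. Your Steps 1--2 are the standard Jordan-block proof of that lemma, and they are complete once $\{a_k\}$ and $\{b_k\}$ are extended to full orthonormal bases (this is what your ``unpaired'' vectors are). The relations $\braket{\alpha_k|\beta_l}=\lambda_k\delta_{kl}$ split $\mathcal{A}+\mathcal{B}$ into mutually orthogonal invariant blocks. On these blocks $R_{\mathcal{A}}R_{\mathcal{B}}$ has eigenvalue $1$ exactly on $\mathcal{A}\cap\mathcal{B}$, eigenvalues $e^{\pm 2i\theta_k}$ on the planes with $\lambda_k\in(0,1)$, and eigenvalue $-1$ on $(\mathcal{A}\cap\mathcal{B}^\perp)\oplus(\mathcal{A}^\perp\cap\mathcal{B})$.

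Your Step 3 diagnosis is also correct, and it is the key point. As printed, the ``multiplicity 1'' clause is false. Take $\mathcal{A}=\mathcal{B}$ of dimension $\ge 2$, or take $\mathcal{A}\perp\mathcal{B}$, where $1$ is not an eigenvalue at all. The lemma therefore needs the extra hypothesis that $1$ is a simple singular value of $V_{\mathcal{A}}^\dagger V_{\mathcal{B}}$, i.e.\ $\dim(\mathcal{A}\cap\mathcal{B})=1$. Theorem~\ref{th:uniquenessEigenstate} relies on this without checking it.

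Where Step 3 slips is in verifying that hypothesis for the paper's walk.

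\textbf{Irreducibility is not enough.} Perron--Frobenius for an irreducible chain only makes the eigenvalue $1$ of $M_\beta$ simple. The discriminant $D$ is real symmetric, so its singular values are the absolute values of its eigenvalues. An eigenvalue $-1$ (a periodic chain) then gives a second singular value $1$ and a second vector in $\mathcal{A}\cap\mathcal{B}$; the two-state flip chain, for which $\mathcal{A}=\mathcal{B}$, is an example. You therefore also need aperiodicity. Here it is automatic: the uniform proposal includes $\mathbf{x}'=\mathbf{x}$, so $M_{\mathbf{x}\mathbf{x}}\ge 1/|\Theta|$ and $-1$ cannot be an eigenvalue.

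\textbf{Infeasible states must be checked.} The paper's $\mathcal{A}$ is spanned by all $\mathbf{x}\in\Theta$, not only by $\Omega$, so you cannot argue only about the walk restricted to $\Omega$. Write $\ket{\mathbf{x}}$ for the basis vectors of $\mathcal{A}$. For the paper's $P$ and $F$, $\bra{\mathbf{y}}P^\dagger FP\ket{\mathbf{x}}=0$ for $\mathbf{x}\neq\mathbf{y}$ unless both are feasible. The reason is that a swap only moves to a feasible candidate and leaves $\ket{m'}_R$, while the matching branch of $P\ket{\mathbf{y}}$ carries $\ket{r_{\mathbf{x}}}_R$. For infeasible $\mathbf{x}$ the diagonal entry is $1-|\Theta|^{-1}\sum_{\mathbf{x}'\in\Omega}A(\mathbf{x},\mathbf{x}')$, which is $<1$ whenever $\Omega\neq\emptyset$ and $A>0$.

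Hence $V_{\mathcal{A}}^\dagger V_{\mathcal{B}}$ is $D_\Omega$ plus a diagonal block with entries $<1$. Irreducibility on $\Omega$ (finite $\beta$, so $A>0$) together with laziness then makes the singular value $1$ simple, carried by $\ket{\Pi}$.
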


Let us define the following subspace:
\begin{equation} \label{eq:subespA}
    \mathcal{A}:=\text{span} \{ \ket{x}_S \ket{f(x)}_F \ket{0}_{S'}\ket{0}_{F'} \ket{0}_R \ket{0}_C, x \in \Theta \},
\end{equation}
and 
\begin{equation}\label{eq:subespB}
    \mathcal{B}:=  P^{\dagger}F P \mathcal{A}.
\end{equation}

It is easy to verify that the restriction of $\Pi_0 P^\dagger FP \Pi_0$ to $\mathcal{A}$ is equal to $V_\mathcal{A}^\dagger V_\mathcal{B}$, where $\Pi_0$ is the projector onto $\mathcal{A}$, since the $(k,l)$ entry of the restriction of $\Pi_0 P^\dagger FP \Pi_0$ to $\mathcal{A}$ is:
    \begin{align}
                &\bra{x_k}_{S} \bra{f(x_k)}_F\bra{0}_{S', F', R, C}\Pi_0 P^\dagger FP \Pi_0 \ket{x_l}_{S} \ket{f(x_l)}_F\ket{0}_{S', F', R, C} \notag \\
        &= \bra{x_k}_{S} \bra{f(x_k)}_F\bra{0}_{S', F', R, C} P^\dagger FP \ket{x_l}_{S} \ket{f(x_l)}_F\ket{0}_{S', F', R, C}.
    \end{align}
wich from the definitions of $V_\mathcal{A}$ and $V_\mathcal{B}$ are the same $(k,l)$ entry of $V_\mathcal{A}^\dagger V_\mathcal{B}$.

This allows us to arrive at the following result:
\begin{theorem}
    Consider the Markov chain generated by $W=RP^\dagger FP$ and denote by $\Delta$ its spectral gap. Then, on the subespace $\mathcal{A}+\mathcal{B}$ defined in (\ref{eq:subespA}) and (\ref{eq:subespB}), $\ket{\Pi}$ is the unique eigenstate of $W$ with eigenvalue 1, and any other eigenvalue is written as $e^{i\theta}$ with $\theta\in\mathbb{R}$ such that $|\theta|\ge\arccos(1-\Delta)$.
    \label{th:uniquenessEigenstate}
\end{theorem}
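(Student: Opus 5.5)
The plan is to write $W$ as a product of two reflections, apply Theorem~\ref{th:Szegedy}, and then translate the angles $\theta_i$ into the spectral gap $\Delta$.

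\textbf{Step 1: two reflections.} Let $\Pi_0$ be the projector onto $\mathcal{A}$ in~(\ref{eq:subespA}). Since every basis vector of $\mathcal{A}$ carries $\ket{0}_{S'}\ket{0}_C$, the operator $R=2\ket{0}\bra{0}_{S'}\otimes\ket{0}\bra{0}_C-I$ agrees with $R_{\mathcal{A}}=2V_{\mathcal{A}}V_{\mathcal{A}}^\dagger-I$. This holds on the invariant sector where $F'$ and $R$ are clean, which is the only sector reached because $P^\dagger$ uncomputes them.

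The factor $U:=P^{\dagger}FP$ is unitary (Lemma~\ref{w_unitary}). Because $F$ is a controlled swap, $F^2=I$, so $U^2=I$ and $U$ is Hermitian. Hence $U$ is itself a reflection, $U=2\Pi_{+}-I$ with $\Pi_+$ its $+1$ eigenprojector. On $\mathcal{A}+\mathcal{B}$ I would then write $W=R_{\mathcal{A}}U$ and identify $U$ with $R_{\mathcal{B}}$ composed with a reflection that fixes $\mathcal{A}$. This follows the argument of Ref.~\cite{PhysRevResearch.5.033059}, where the walk is rewritten as $R_{\mathcal{A}}R_{\mathcal{B}'}$ up to conjugation by $P$.

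\textbf{Step 2: apply Theorem~\ref{th:Szegedy}.} The remark before the statement identifies $\Pi_0U\Pi_0|_{\mathcal{A}}$ with $V_{\mathcal{A}}^\dagger V_{\mathcal{B}}$. Computing its matrix elements as in the proof of Theorem~\ref{eigen_W} gives, for feasible $x\neq x'$,
\[
\langle x|V_{\mathcal{A}}^\dagger V_{\mathcal{B}}|x'\rangle=\tfrac{1}{|\Theta|}\sqrt{A(x,x')A(x',x)}.
\]
The diagonal entry collects the rejection weight together with the weight of infeasible proposals. This is the discriminant matrix $D=\sqrt{\pi}\,M\,\sqrt{\pi}^{-1}$ of the Metropolis chain $M$, which is symmetric by detailed balance.

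Theorem~\ref{th:Szegedy} then yields the following. On $\mathcal{A}+\mathcal{B}$ the eigenvalue $1$ is simple and is attained by the vector corresponding to singular value $1$ of $D$, namely $\sqrt{\pi}$; that vector is exactly $\ket{\Pi}$ by Theorem~\ref{eigen_W}. All other eigenvalues are $e^{\pm 2i\theta_j}$ with $\cos\theta_j=\lambda_j$, or $-1$.

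Uniqueness needs the largest singular value $1$ to be simple. I would obtain this from irreducibility of $M$ on $\Omega$: the proposal is all-to-all and $A>0$ for finite $\beta$. Infeasible states are isolated, since they only self-loop, so I would restrict to $\Omega$ as the paper does.

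\textbf{Step 3: the gap.} Every nontrivial $\lambda_j$ satisfies $\lambda_j\le 1-\Delta$, where $\Delta$ is the spectral gap of $M$, so $\theta_j\ge\arccos(1-\Delta)$. The eigenphase is $2\theta_j\ge\theta_j$, which gives $|\theta|\ge\arccos(1-\Delta)$; the eigenvalue $-1$ trivially satisfies the same bound.

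\textbf{Main obstacle.} The hardest part is Step 1. $W$ is stated as $R\,P^\dagger F P$, not as $R_{\mathcal{A}}R_{\mathcal{B}}$, so I must show that the Hermitian involution $U$ acts on $\mathcal{A}+\mathcal{B}$ as $R_{\mathcal{B}}$ up to a sign on $\mathcal{A}^\perp\cap\mathcal{B}$. I would first try to check directly that $U$ maps $\mathcal{A}\to\mathcal{B}$ and $\mathcal{B}\to\mathcal{A}$, and then apply Jordan's lemma to the pair of projectors $\Pi_0$ and $U\Pi_0U$. This reproduces the $2\times2$ invariant blocks used in the proof of Theorem~\ref{th:Szegedy}. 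A second point to check is that singular values of $V_{\mathcal{A}}^\dagger V_{\mathcal{B}}$ are absolute values of eigenvalues of $D$, so that negative eigenvalues of $M$ are also bounded by $1-\Delta$ in absolute value. Here $\Delta$ should be read as the absolute spectral gap.
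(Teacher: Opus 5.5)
\textbf{There is a genuine gap: Step~1's identification of $U$ with $R_{\mathcal{B}}$ is false, and Steps~2--3 inherit the error.}

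Take an eigenvector $v$ of the Hermitian matrix $D=V_{\mathcal{A}}^\dagger V_{\mathcal{B}}=V_{\mathcal{A}}^\dagger U V_{\mathcal{A}}$ with eigenvalue $\mu\in(-1,1)$, and set $a=V_{\mathcal{A}}v$, $b=Ua$. Then $Ua=b$, $Ub=a$ and $\Pi_0 b=\mu a$. By contrast, $R_{\mathcal{B}}a=2U\Pi_0Ua-a=2\mu b-a$.

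So on the block $\mathrm{span}\{a,b\}$, $U$ is the reflection through $a+b$, while $R_{\mathcal{B}}$ is the reflection through $b$. They differ on every nontrivial block, not just by a sign on $\mathcal{A}^\perp\cap\mathcal{B}$. No reflection fixing $\mathcal{A}$ repairs this: $R_{\mathcal{B}}U$ and $UR_{\mathcal{B}}$ both move $a$.

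The correct relation, which is the paper's key step, is $R_{\mathcal{B}}=UR_{\mathcal{A}}U$. Hence $R_{\mathcal{A}}R_{\mathcal{B}}=(R_{\mathcal{A}}U)^2=W^2$. Theorem~\ref{th:Szegedy} therefore describes $W^2$, and the list $e^{\pm 2i\theta_j}$, $-1$ in your Step~2 is the spectrum of $W^2$, not of $W$.

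Computing directly on the block gives $Wa=2\mu a-b$ and $Wb=a$. So $W$ has characteristic polynomial $t^2-2\mu t+1$ and eigenvalues $e^{\pm i\arccos\mu}$. Concretely, these are $e^{\pm i\theta_j}$ for $\mu=\lambda_j>0$, $-e^{\mp i\theta_j}$ for $\mu=-\lambda_j<0$, and $\pm i$ for $\mu=0$. These are exactly the square roots the paper lists.

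Consequently your Step~3 reasoning (``the eigenphase is $2\theta_j\ge\theta_j$'') does not hold. The true phases are $\theta_j$ and $\pi-\theta_j$, so there is no factor of two to spare. For the $+e^{\pm i\theta_j}$ branch the bound can be attained with equality, and it follows only from $\theta_j=\arccos\lambda_j\ge\arccos(1-\Delta)$. The phases $\pi-\theta_j$ and $\pi/2$ are at least $\pi/2\ge\arccos(1-\Delta)$.

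The theorem survives once Step~1 is replaced, in either of two ways:
\begin{itemize}
\item The paper's route: apply Szegedy to $W^2$, take square roots, and use Theorem~\ref{eigen_W} to decide that the simple eigenvalue $1$ of $W^2$ comes from $+1$ rather than $-1$ of $W$.
\item The $2\times 2$ block computation above. This is what your Jordan-lemma fallback would actually produce, and it refutes the identification you set out to prove.
\end{itemize}
Alternatively, $U$ is the reflection about its own $+1$ eigenspace $\mathcal{B}'\neq\mathcal{B}$, so $W=R_{\mathcal{A}}R_{\mathcal{B}'}$ exactly. In that case the relevant cosines are $\sqrt{(1+\mu)/2}$, not $|\mu|$.

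The rest of your proposal is sound: the discriminant entries, the identification of $\sqrt{\pi}$ with $\ket{\Pi}$, and the irreducibility argument for simplicity. The signed form $e^{\pm i\arccos\mu}$ also makes your absolute-gap caveat unnecessary. Every nontrivial $\mu$ is at most $1-\Delta$ and $\arccos$ is decreasing, so the bound follows directly.
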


\begin{proof} \label{eq:RA}
    Let us define the following reflections:
    \begin{equation}
        R_\mathcal{A} = 2\sum_{x\in \Theta} \ket{x}_{S} \ket{f(x)}_F\ket{0}_{S', F', R, C}\bra{x}_{S} \bra{f(x)}_F\bra{0}_{S', F', R, C}-I
    \end{equation}
    and
    \begin{align} \label{eq:RB}
        R_\mathcal{B}
        &= 2\sum_{x\in\Theta}  P^\dagger FP \ket{x}_{R_{\rm S}} \ket{x}_{S} \ket{f(x)}_F\ket{0}_{S', F', R, C}\bra{x}_{S} \bra{f(x)}_F\bra{0}_{S', F', R, C} (P^\dagger FP )^\dagger - I \nonumber \\
        &= P^\dagger FP R_\mathcal{A} P^\dagger FP.
    \end{align}
    Thus, if we define the operator given by these two reflections,
    \begin{equation}
        U=R_\mathcal{A}R_\mathcal{B} = R_\mathcal{A} P^\dagger FP R_\mathcal{A} P^\dagger FP = W^2,
    \end{equation}
    where the reflection $R$ of the construction $W=RP^\dagger FP$ acts as $R_\mathcal{A}$ on $\mathcal{A}+ \mathcal{B}$.
    Therefore, on $\mathcal{A}+ \mathcal{B}$, the eigenvalues of $W$ are equal to the square root oh those of $R_\mathcal{A}R_\mathcal{B}$. By Theorem \ref{th:Szegedy}, they include 1 or -1 with multiplicity 1, and, because of Theorem \ref{eigen_W}, it is in fact 1 with the corresponding eigenstate $\ket{\Pi}$.
    Any other eigenvalue of $U_W$ is $e^{\pm i\theta_l}$, $-e^{\pm i\theta_l}=e^{i(\pm\theta_l+\pi)}$, or $\pm i=e^{\pm \frac{\pi}{2}i}$, whose phase has modulus no less than
    \begin{equation}
    \arccos \left(\max \{|\lambda_l|\}\right)=\arccos (1-\Delta)
    \end{equation}
    in any case. 
\end{proof}

Theorem~\ref{th:uniquenessEigenstate} guarantees that the state $\ket{\Pi_{\beta}}$ is the unique eigenstate of the operator with eigenvalue $1$. This property is crucial for the correct operation of the Quantum Simulated Annealing (QSA) algorithm. The existence of a single, well-defined stationary eigenstate ensures that the application of quantum phase estimation will unambiguously project the system onto $\ket{\Pi_{\beta}}$ when measuring the eigenvalue register in the $\ket{0}^{\otimes p}$ state. Consequently, the subsequent randomization procedure~\cite{PhysRevLett.101.130504} faithfully reproduces the desired sampling dynamics, allowing the algorithm to converge toward the target distribution encoded in $\ket{\Pi_{\beta}}$.

It is important to clarify the role of the initial state preparation. 
Since no a priori structural information about the feasible region is assumed, the register $S$ is initialized in the uniform superposition over all computational basis states by applying Hadamard gates. 
As a consequence, the initial state necessarily contains components corresponding to configurations that violate one or more constraints. 
This does not pose a problem for the algorithm: infeasible configurations are not reinforced by the dynamics and cannot accumulate additional probability mass, since new candidate states are generated coherently in the auxiliary register $S'$ and accepted only conditionally. 
Moreover, the intermediate partial measurements applied to the ancillary registers progressively eliminate entanglement with rejected or invalid transitions, causing the amplitude associated with infeasible states to decay. 
As the walk proceeds, probability flows toward the feasible subspace, while the residual weight outside it vanishes asymptotically. 
Thus, despite starting from a fully unstructured superposition, the combined effect of the Metropolis filtering and the intermediate measurements ensures convergence toward the physically relevant subspace without requiring an explicit feasibility projection at initialization.

\end{document}